
\documentclass[journal]{IEEEtran}
\ifCLASSINFOpdf
\else
\fi
\hyphenation{op-tical net-works semi-conduc-tor}

\usepackage{lipsum}
\usepackage{amsmath}
\usepackage{math-common}
\usepackage{xcolor}
\usepackage{graphicx}
\usepackage{amsthm}
\usepackage{algorithm}
\usepackage{algpseudocode}%
\usepackage{booktabs}
\usepackage{multirow}
\usepackage[noadjust]{cite}

\graphicspath{ {./img/} }

\newtheorem{theorem}{Theorem}
\newtheorem{lemma}{Lemma}
\newtheorem{definition}{Definition}
\newtheorem{corollary}{Corollary}
\newtheorem{prop}{Proposition}

\begin{document}
%
\title{Covariance-Free Sparse Bayesian Learning}
%
%
%

\author{Alexander~Lin,~\IEEEmembership{Student Member,~IEEE,}
        Andrew~H.~Song,~\IEEEmembership{Student~Member,~IEEE,} \\
        Berkin~Bilgic,
        and~Demba~Ba,~\IEEEmembership{Member,~IEEE}
\thanks{A. Lin and D. Ba are with the School of Engineering and Applied Sciences, Harvard University, Cambridge,
MA, 02138 USA (e-mail: alin@seas.harvard.edu; demba@seas.harvard.edu).}
\thanks{ A. H. Song is with the Electrical Engineering and Computer Science, Massachusetts Institute of Technology, Cambridge, MA, 02138 USA (email: andrew90@mit.edu).}
\thanks{{B. Bilgic is with Harvard-MIT Health Sciences and Technology, Massachusetts Institute of Technology, Cambridge, MA, USA, 
Athinoula A. Martinos Center for Biomedical Imaging, Charlestown, MA, USA 
Department of Radiology, Harvard Medical School, Boston, MA, USA.}}}
\maketitle

\begin{abstract}
Sparse Bayesian learning (SBL) is a powerful framework for tackling the sparse coding problem while also providing uncertainty quantification.  The most popular inference algorithms for SBL exhibit prohibitively large computational costs for high-dimensional problems due to the need to maintain a large covariance matrix. To resolve this issue, we introduce a new method for accelerating SBL inference -- named covariance-free expectation maximization (CoFEM) -- that avoids explicit computation of the covariance matrix.  CoFEM solves multiple linear systems to obtain unbiased estimates of the posterior statistics needed by SBL.  This is accomplished by exploiting innovations from numerical linear algebra such as preconditioned conjugate gradient and a little-known diagonal estimation rule.
For a large class of compressed sensing matrices, we provide theoretical justifications for why our method
scales well in high-dimensional settings.
Through simulations, we show that CoFEM can be up to thousands of times faster than existing baselines without sacrificing coding accuracy. Through applications to calcium imaging deconvolution and multi-contrast MRI reconstruction, we show that CoFEM enables SBL to tractably tackle high-dimensional sparse coding problems of practical interest. 
\end{abstract}

\IEEEpeerreviewmaketitle

\section{Introduction}
%
%
%
%
\IEEEPARstart{S}{parse} coding is a fundamental problem in signal processing that seeks a sparse solution $\bd z$ to the equation
\begin{align}
\bd y = \bd \Phi \bd z + \bd \varepsilon, \label{sparse-coding}
\end{align}
where $\bd z \in \R^D$ is a latent sparse vector, $\bd y \in \R^N$ is an observed measurement vector, $\bd \Phi \in \R^{N \times D}$ is a known dictionary, and $\bd \varepsilon \in \R^N$ is an unknown noise vector. 
 
 \emph{Sparse Bayesian learning} (SBL) is an effective methodology for sparse coding.  It has been employed in several popular models, such as sparse Bayesian regression \cite{mackay1996bayesian}, relevance vector machines \cite{tipping2001sparse}, and Bayesian compressed sensing \cite{ji2008bayesian, ji2008multitask, wipf2004sparse}.  The practical applications  of SBL are numerous, encompassing diverse examples such as medical image reconstruction \cite{bilgic2011multi, liu2018image}, direction of arrival estimation \cite{ chen2018off}, human pose estimation \cite{ ma2006sparse}, structural health monitoring \cite{zhang2012compressed}, seismic exploration \cite{yuan2017sparse}, and visual tracking \cite{williams2005sparse}. 

SBL offers several advantages compared to other common approaches to sparse coding (e.g. {$\ell_0$ or $\ell_1$ regularization}).  As a Bayesian method, SBL provides uncertainty quantification and the ability to recover credible intervals over $\bd z$ instead of a single point solution.   Moreover, SBL does not need to tune regularization parameters since it can learn them or integrate them out using hyperpriors \cite{ji2008multitask}.  As a generative model, SBL can be embedded as a submodule within a larger framework to enforce more complex structure for $\bd z$ (e.g. group sparsity \cite{wu2015space}, block sparsity \cite{ fang2014pattern}).  Finally, SBL has favorable optimization properties, such as a sparser global minimum than $\ell_1$ methods and fewer local minima than $\ell_0$ methods \cite{wipf2007new}.

One often-noted limitation of SBL is the heavy computational cost of its inference algorithm \cite{tipping2001sparse, bilgic2011multi}.  On the one hand, the fact that SBL requires more computation than non-Bayesian approaches should be unsurprising, since SBL recovers an entire distribution instead of a single point estimate.  On the other hand, the most widely-used options for SBL inference scale poorly to high-dimensional problems, which are becoming increasingly common in many domains.  This limitation threatens to render the SBL paradigm obsolete for large-scale settings, as inference cannot be performed in an acceptable timeframe for practical applications.  

The principal inference procedure of SBL is the expectation-maximization (EM) algorithm \cite{mackay1996bayesian, tipping2001sparse}.  Each iteration of EM is computationally expensive, requiring $O(D^3)$-time and $O(D^2)$-space to invert a large covariance matrix, where $D$ is the dimension of the sparse codes. There have been several attempts to reduce the costs of EM, using iteratively reweighted least-squares (IRLS) \cite{wipf2010iterative}, approximate message passing (AMP) \cite{fang2016two}, and variational inference (VI) \cite{duan2017fast}. A popular alternative to EM called the sequential algorithm (Seq) \cite{tipping2003fast} is often faster in practice. However, these methods lack either \emph{scalability} at very high dimensions $D$ or \emph{accuracy} in recovering ground-truth sparse codes when compared to EM.


\subsection{Contributions}
We introduce a novel method for accelerating SBL's EM algorithm for high-dimensional problems.  We call our method \textbf{covariance-free expectation maximization (CoFEM)} because it eliminates the main bottleneck of EM -- i.e. the storage and inversion of the covariance matrix.  We demonstrate that CoFEM has significant advantages in both scalability and accuracy over other SBL approaches (reviewed in Section \ref{sec:model}).  Our contributions are categorized into four types -- methodology (Section \ref{sec:cofem}), theory (Section \ref{sec:theory}), experimentation (Section \ref{sec:simulated}), and applications (Section \ref{sec:experiments}).

\emph{Methodology:} 
We develop CoFEM, which solves multiple linear systems and exploits a little-known diagonal estimation rule to obtain \emph{unbiased} estimates of the posterior moments needed by EM.  The multiple systems can be solved in parallel using an iterative solver, such as the conjugate gradient (CG) algorithm, without constructing the covariance matrix.  This simple, yet powerful principle reduces EM's per-iteration time complexity from $O(D^3)$ to $O(UK\tau_D)$, where $\tau_D$ is the time needed for matrix-vector multiplication and $U, K \ll D$ are hyperparameters for the number of CG steps and the number of linear systems.  
Furthermore, CoFEM reduces EM's space complexity from $O(D^2)$ to $O(DK)$. 

\emph{Theory:} We prove new theorems that further characterize CoFEM's asymptotic complexities.  In applying this theory to matrices $\bd \Phi$ that satisfy the restricted isometry property (RIP) and commonly occur in compressed sensing applications, we show that, in the limit of a large number of EM iterations, CoFEM's hyperparameters $U$ and $K$ can be kept small even as the dimensionality of the problem $D$ grows very large.  We use this theory to justify practical choices for $U$ and $K$ that need not increase with $D$, leading to signficant computational savings compared to existing SBL algorithms.

\emph{Experimentation:} We perform simulations comparing CoFEM to several existing SBL algorithms \cite{tipping2001sparse}, \cite{wipf2010iterative}, \cite{fang2016two}, \cite{duan2017fast}, \cite{tipping2003fast}, for high-dimensional sparse signal recovery.  Across all of our settings, CoFEM is able to maintain the same level of accuracy as EM due to its unbiased estimation of posterior moments.
In addition, CoFEM's highly-parallelized and space-saving design enables further acceleration via graphics processing units (GPUs); most other approaches cannot fully realize this benefit due to their heavy memory requirements.  In practice, CoFEM with GPU acceleration can reduce hours of computation for covariance-based algorithms to a few seconds.


\emph{Applications:} We apply CoFEM-equipped SBL in two settings of practical interest: (1) calcium deconvolution and (2) multi-contrast MRI reconstruction.  In these high-dimensional settings, CoFEM attains competitive computation time with non-Bayesian approaches, while providing several benefits (e.g. superior performance, uncertainty quantification).  These applications require extensions of the SBL model to multi-task learning and to settings with non-negativity constraints, which we demonstrate CoFEM is flexible enough to handle.       

\section{Sparse Bayesian Learning} \label{sec:model}

\subsection{Generative Model}
To solve Eq. \eqref{sparse-coding}, SBL imposes the following model:
\begin{align}
\bd z &\sim \mathcal{N}(\bd 0, \mathrm{diag}\{\bd \alpha\}^{-1}), \nonumber \\
\bd y \given \bd z &\sim \mathcal{N}(\bd \Phi \bd z, 1 / \beta \mathbf I), \label{model}
\end{align}
where $\beta \in \R$ is the inverse of the variance of the noise (commonly called precision) and $\mathbf I$ is the identity matrix.  Given $\bd y$, SBL performs inference on this model to recover $\bd z$.

The main identifying feature of SBL is the diagonal Gaussian prior with precision parameters $\bd \alpha \in \R^D$ for $\bd z$.  The notation $\mathrm{diag}\{\bd \alpha\}$ in Eq. \eqref{model} maps $\bd \alpha$ to a $D \times D$ matrix with $\bd \alpha$ along its diagonal and zero elsewhere.  SBL performs type II maximum likelihood  estimation \cite{wipf2011latent} by integrating out $\bd z$ and optimizing $\bd \alpha$.  Thus, SBL learns a posterior distribution with uncertainty over $\bd z$.  The overall learning objective is:                       
\begin{align}
\max_{\bd \alpha}& \log p(\bd y \given \bd \alpha) = \log \int_{\bd z} p(\bd y \given \bd z) p(\bd z \given \bd \alpha) d \bd z  \label{mle2}.
\end{align}

As this objective is optimized, many of the elements of $\bd \alpha$ diverge to $\infty$.  For these elements, the independent Gaussian priors converge to point masses at zero, forcing their respective posteriors to follow suit.  Thus, upon convergence of $\bd \alpha$ {to $\bhat \alpha$}, the recovered posterior distribution {$p(\bd z \given \bd y, \bhat \alpha)$} is often highly sparse.  This phenomenon is called \emph{automatic relevance determination} \cite{wipf2007new} because SBL learns which elements of $\bd z$ are ``relevant" (i.e. non-zero) from the data.

In the remainder of this section, we review several inference schemes that have been proposed to optimize Eq. \eqref{mle2} and comment on their respective shortcomings.    

\subsection{Expectation-Maximization Algorithm} \label{sec:inference}
Most SBL inference schemes are built on the \emph{expectation-maximization} (EM) algorithm, {a general framework for parameter estimation in the presence of latent variables \cite{dempster1977maximum, tipping2001sparse}.  EM iteratively alternates between an expectation step (E-Step) and a maximization step (M-Step). Let $\bidx \alpha t$ be the solution at the start of the $t$-th iteration.  The E-Step computes the expectation $Q(\bd \alpha; \bidx \alpha t)$ of the complete data log-likelihood $\log p(\bd z, \bd y \given \bd \alpha)$ with respect to the latent posterior $p(\bd z \given \bd y, \bidx \alpha t)$:}  
\begin{align}
&Q(\bd \alpha; \bidx \alpha t) = \E_{p(\bd z \given \bd y, \bidx \alpha t)}[\log p(\bd z, \bd y \given \bd \alpha)] \nonumber \\
&= \E_{p(\bd z \given \bd y, \bidx \alpha t)}[\log p(\bd z \given \bd \alpha) + \log p(\bd y \given \bd z)] \nonumber \\
&\propto \sum_{j=1}^D \log \alpha_j - \alpha_j  \cdot \E_{p(\bd z \given \bd y, \bidx \alpha t)}\left[z_j^2\right] + \text{const},
 \label{estep}
\end{align}
where ``const" absorbs all terms that are constant with respect to $\bd \alpha$.  The posterior $p(\bd z \given \bd y, \bidx \alpha t)$ is a multivariate Gaussian distribution $\mathcal{N}(\bd \mu, \bd \Sigma)$ with mean and covariance parameters
\begin{align}
\bd \mu = \beta \bd \Sigma \bd \Phi^\top \bd y, & & \bd \Sigma = (\beta \bd \Phi^\top \bd \Phi + \mathrm{diag}\{\bidx \alpha t\})^{-1}. \label{estep-params}
\end{align}  
The second moment of each $z_j$ in Eq. \eqref{estep} can be decomposed into a sum over the squared mean and variance, i.e.
\begin{align}
\E_{p(\bd z| \bd y, \bidx \alpha t)}[z_j^2] &= \E_{p(\bd z| \bd y, \bidx \alpha t)}[z_j]^2 + \Var_{p(\bd z| \bd y, \bidx \alpha t)}[z_j] \nonumber\\
&= \mu_j^2 + \Sigma_{j, j}, \label{second-moment}
\end{align} 

The M-Step updates each $\idx[j] \alpha t$ by maximizing Eq. \eqref{estep} with respect to $\alpha_j$.  Given $\bd \mu$ and $\bd \Sigma$, this can be done in closed-form by differentiating $Q$:
\begin{align}
\frac{\partial Q}{\partial \alpha_j} =  \frac{1}{\alpha_j} - {\mu_j^2 + \Sigma_{j, j}} = 0 \implies {\alpha}_j^{(t + 1)} = \frac{1}{\mu_j^2 + \Sigma_{j, j}}, \label{mstep}
\end{align} 
EM repeats Eqs. \eqref{estep-params} and \eqref{mstep} for $T$ iterations until convergence, while guaranteeing non-negative change in the log-likelihood objective of Eq. \eqref{mle2} at each step.

{Despite its simplicity, the EM algorithm is limited by its computational cost}.  The E-Step of Eq. \eqref{estep-params} is expensive {for large $D$}.  Storing $\bd \Sigma$ requires $O(D^2)$-space and computing it through matrix inversion requires $O(D^3)$-time.  This makes the standard EM algorithm challenging at high dimensions.

\subsection{Previous Attempts to Accelerate EM}
There have been several attempts to accelerate the EM algorithm for SBL inference.  One approach is based on \emph{iteratively reweighted least squares} (IRLS) \cite{wipf2010iterative}, which applies the Woodbury matrix identity to $\bd \Sigma$ in Eq. \eqref{estep-params}, yielding
\begin{align}
\bd \Sigma = \bold C - \bold C \bold \Phi^\top (\bold I + \bold \Phi \bold C \bold \Phi^\top)^{-1} \bold \Phi \bold C, \label{woodbury}
\end{align}
where $\bold C = \mathrm{diag}\{\bidx \alpha t\}^{-1}$.  By inverting an $N \times N$ matrix (as opposed to a $D \times D$ matrix), IRLS reduces the per-iteration time complexity of EM to $O(N^3 + N^2D)$. For problems in which $N \ll D$, IRLS is more efficient than EM.
However, if $N = O(D)$, the time complexity is still $O(D^3)$. 
Furthermore, like EM, IRLS requires storage of $\bd \Sigma$, which remains an expensive $O(D^2)$-space cost.        

Another line of work is based on variants of \emph{approximate message passing} (AMP).  AMP combines quadratic and Taylor series approximations with loopy belief propagation to iteratively estimate the means and variances of $\bd z$ in Eq. \eqref{second-moment}, circumventing matrix inversion \cite{fang2016two, al2014sparse}.  Though faster than EM in practice, AMP is known to diverge for dictionaries $\bd \Phi$ that do not satisfy zero-mean, sub-Gaussian criteria \cite{al2017gamp}.  

A third common strategy is to employ \emph{variational inference} (VI), which approximates the true posterior $p(\bd z \given \bd y, \bd \alpha)$ with a simpler surrogate $q(\bd z)$~\cite{bishop2000variational, babacan2011low, shutin2011fast}.  This allows for SBL inference that is inverse-free \cite{duan2017fast, worley2019scalable}.  However, VI approaches optimize a lower bound on Eq. \eqref{mle2} instead of the true log-likelihood objective.  Thus, they may converge to a sub-optimal solution for $\bd \alpha$.  Both AMP and VI are ultimately limited by the fact that their approximations to the means and variances of $\bd z$ can be \emph{biased} for general dictionaries $\bd \Phi$ \cite{al2017gamp, worley2019scalable}.  In Section \ref{sec:cofem}, we present a new method that ensures an \emph{unbiased} estimation of these moments, regardless of the structure of $\bd \Phi$.         
  
\subsection{Sequential Algorithm} \label{seq-alg}
The sequential (Seq) algorithm, is a notable alternative to EM that reduces computation time and space in practice \cite{tipping2003fast, ji2008bayesian}. Seq maintains a set $\mathcal{S} \subseteq \{1, 2, \ldots, D\}$  of ``active" indices such that $\alpha_j$ is finite for each $j \in \mathcal{S}$ and $\alpha_j = \infty$ for all $j \neq \mathcal{S}$.  Initially, $\mathcal{S} = \emptyset$.  {Indices are sequentially added to or deleted from $\mathcal{S}$ if making such a change can increase the log-likelihood objective (Eq. \eqref{mle2}).}  

At any given point, Seq only needs to store parts of the mean vector $\bd \mu$ and covariance matrix $\bd \Sigma$ corresponding to $\mathcal{S}$; all other components are assumed to be zero.  Thus, for truly sparse vectors $\bd z$ with $d$ non-zero components such that $d \ll D$, Seq is more efficient than EM.  Yet unlike EM, the number of iterations needed for Seq depends on $d$, since at least $d$ steps must be taken to fully recover $\bd z$. The overall time complexity is $O(d^2 D)$ and the space complexity is $O(d^2 + D)$ \cite{ji2008multitask}.  

However, Seq has several limitations.  
It is often the case that $d$ is a fraction or percentage of $D$.  If $d$ grows linearly with $D$, the asymptotic time cost is $O(D^3)$, similar to EM.  This may explain why Seq can still be up to hundreds of times slower than non-Bayesian methods for large $D$ \cite{bilgic2011multi}.  Also, the algorithm's sequential nature hinders its potential for speedup on parallel machines. Lastly, for large $D$, it remains costly to store parts of the quadratically-sized covariance matrix $\bd \Sigma$.                  

\section{Covariance-Free Expectation-Maximization} \label{sec:cofem}
We introduce \emph{covariance-free expectation-maximization} (CoFEM), a new SBL inference scheme for accelerating EM that removes the need to compute (let alone invert) the covariance matrix $\bd \Sigma$.  Our method is built on several advances from numerical linear algebra  \cite{bekas2007estimator, hestenes1952methods}.

Our key observation is that not all elements of $\bd \Sigma$ are needed for the M-Step in Eq. \eqref{mstep}. 
Indeed, we only need $\bd \mu$ (which depends on $\bd \Sigma$ via Eq. \eqref{estep-params}) and the the diagonal elements of $\bd \Sigma$ to update $\bhat \alpha$.
Thus, we propose a simplified E-Step that can estimate the two desired quantities $\{\mu_j, \Sigma_{j, j}\}$ for all $j$ from \emph{solutions to linear systems}, thereby avoiding the need for matrix inversion. We can re-express Eq. \eqref{estep-params} for $\bd \mu$ as 
\begin{align}
\bd \Sigma^{-1} \bd \mu = \beta \bd \Phi^\top \bd y, \label{mu-lin-sys}
\end{align}
where $\bd \Sigma^{-1} = \beta \bd \Phi^\top \bd \Phi + \mathrm{diag}\{\bidx \alpha t\}$.  Thus, $\bd \mu$ is the solution $\bd x$ to the linear system $\mathbf A \bd x = \bd b$ for $\mathbf A := \bd \Sigma^{-1}$ and $\bd b := \beta \bd \Phi^\top \bd y$.

\subsection{Diagonal Estimation} \label{simp-estep}
We leverage a result from \cite{bekas2007estimator} to obtain the diagonal of $\bd \Sigma$.  

\begin{prop}
[Diagonal Estimation Rule] Let $\mathbf M \in \R^{D \times D}$ be a square matrix.  Let $\bd p_1, \bd p_2, \ldots, \bd p_K \in \R^D$ be random probe vectors, where each $\bd p_k$ comprises independent and identically distributed elements such that $\E[p_{k, j}] = 0$ for all $j = 1, \ldots, D$.  For each $\bd p_k$, let $\bd x_k = \bold M \bd p_k$.  Consider the estimator $\bd s \in \R^D$, where, for each $j = 1, \ldots, D$,
\begin{align}
s_j = \frac{\sum_{k=1}^K p_{k, j} \cdot x_{k, j}}{\sum_{k=1}^K p_{k, j}^2}. \label{single-diag-estimator}
\end{align}
Then, each $s_j$ is an unbiased estimator of $\mathrm M_{j, j}$.  
\label{der}           
\end{prop}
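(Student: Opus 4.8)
The plan is to expand the numerator of each $s_j$ using $\bd x_k = \mathbf M \bd p_k$, isolate the diagonal contribution $\mathbf M_{j,j}$, and show that the leftover off-diagonal error term has zero expectation. First I would write, for each probe index $k$ and coordinate $j$,
\begin{align}
p_{k,j}\, x_{k,j} = p_{k,j} \sum_{i=1}^D \mathbf M_{j,i}\, p_{k,i} = \mathbf M_{j,j}\, p_{k,j}^2 + \sum_{i \neq j} \mathbf M_{j,i}\, p_{k,j}\, p_{k,i}. \nonumber
\end{align}
Summing over $k = 1, \ldots, K$ and dividing by $\sum_k p_{k,j}^2$ gives the exact identity
\begin{align}
s_j = \mathbf M_{j,j} + \sum_{i \neq j} \mathbf M_{j,i}\, R_{j,i}, \qquad R_{j,i} := \frac{\sum_{k} p_{k,j}\, p_{k,i}}{\sum_{k} p_{k,j}^2}. \nonumber
\end{align}
Since the entries $\mathbf M_{j,i}$ are deterministic, by linearity of expectation it then suffices to prove that $\E[R_{j,i}] = 0$ for every $i \neq j$.

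The main obstacle is that $R_{j,i}$ is a ratio of random variables, so $\E[R_{j,i}]$ need not equal the ratio of the expectations of numerator and denominator; I would get around this by conditioning. Conditioning on the $j$-th coordinates $\{p_{k,j}\}_{k=1}^K$ of all the probes makes the denominator and each factor $p_{k,j}$ in the numerator deterministic, while the coordinates $\{p_{k,i}\}_{k=1}^K$ with $i \neq j$ remain random, independent of the conditioning variables, and of mean zero (this uses that the elements of each probe vector are i.i.d.\ with zero mean and that the probes are drawn independently of one another). Therefore
\begin{align}
\E\big[\, R_{j,i} \,\big|\, \{p_{k,j}\}_{k=1}^K \,\big] = \frac{\sum_{k} p_{k,j}\, \E[p_{k,i}]}{\sum_{k} p_{k,j}^2} = 0, \nonumber
\end{align}
and the tower property yields $\E[R_{j,i}] = \E\big[\E[R_{j,i} \mid \{p_{k,j}\}_{k}]\big] = 0$. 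Substituting back gives $\E[s_j] = \mathbf M_{j,j}$, as claimed.

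A technical point worth flagging is that $s_j$ is ill-defined on the event $\{\sum_k p_{k,j}^2 = 0\}$, which has positive probability whenever the probe distribution has an atom at $0$. In practice this is sidestepped by using Rademacher probes ($p_{k,j} \in \{-1, +1\}$), so that $\sum_k p_{k,j}^2 = K$ deterministically; the estimator then simplifies to $s_j = \tfrac{1}{K}\sum_k p_{k,j}\, x_{k,j}$ and the whole argument collapses to the one-line computation $\E[p_{k,j}\, x_{k,j}] = \mathbf M_{j,j}\,\E[p_{k,j}^2] = \mathbf M_{j,j}$, needing only within-vector independence. More generally one restricts to probe distributions with $\sum_k p_{k,j}^2 > 0$ almost surely. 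I expect the conditioning step to be the only genuine subtlety; the rest is bookkeeping with linearity of expectation.
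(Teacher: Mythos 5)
Your proof is correct and follows essentially the same route as the paper's: expand $x_{k,j}$ via $\mathbf M \bd p_k$, isolate $\mathrm M_{j,j}$, and kill the off-diagonal remainder in expectation using the independence and zero mean of the $j'$-coordinates. Your conditioning/tower-property step is just a more careful justification of the factorization the paper performs directly, and your remark about the denominator possibly vanishing (moot for Rademacher probes, where it equals $K$) is a reasonable caveat the paper implicitly relies on.
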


\begin{proof}
Consider $s_j$, the $j$-th element of $\bd s$.  We have
\begin{align}
s_j &= \frac{\sum_{k=1}^K \left( p_{k, j} \cdot \left(\sum_{j'=1}^D \mathrm{M}_{j, j'} \cdot p_{k, j'} \right)\right)}{\sum_{k=1}^K p_{k, j}^2} \nonumber \\
&= \mathrm{M}_{j, j} + \sum_{j' \neq j} \mathrm{M}_{j, j'}  \cdot \frac{\sum_{k=1}^K p_{k, j} \cdot  p_{k, j'}}{\sum_{k=1}^K p_{k, j}^2}. \label{diag-expansion}
\end{align}
Thus, $\E[s_j]$ is equal to the following:
\begin{align}
&\mathrm{M}_{j, j} + \sum_{j' \neq j} \mathrm{M}_{j, j'}  \cdot \left(\sum_{k=1}^K \underbrace{\E[p_{k, j'}]}_{=0} \cdot  \E \left[\frac{p_{k, j}}{\sum_{k=1}^K p_{k, j}^2} \right]\right),
\end{align} 
where we have applied the fact that the $j$ and $j'$ components of $\bd p_k$ are independent to arrive at a product of expectations.  Since $\E[p_{k, j'}] = 0$ for all $k$ and $j'$, it follows that $\bd \E[s_j] = \mathrm{M}_{j, j}$.   
\end{proof}

We apply this rule to $\bd \Sigma$ to estimate its diagonal elements.  Following \cite{bekas2007estimator}, we use the \emph{Rademacher distribution} for drawing the probe vector $\bd p_k$, where each $p_{k,j}$ is either $-1$ or $+1$ with equal probability. This simplifies Eq. \eqref{single-diag-estimator} to 
\begin{align}
s_j = \frac{1}{K} \sum_{k=1}^K p_{k, j} \cdot x_{k, j} \label{diag-estimate}.
\end{align}        
The diagonal estimation rule turns an \emph{explicit} property of a matrix (i.e. the diagonal elements) into an \emph{implicit} quantity that can be estimated without physically forming the matrix.  To exploit this rule, we only need a method for applying $\bd \Sigma$ to each vector $\bd p_k$ to obtain $\bd x_k$; this can be done by solving a linear system $\bold A \bd x = \bd b$, where $\bold A := \bd \Sigma^{-1}$ and $\bd b := \bd p_k$.

In summary, the quantities $\{\mu_j, \Sigma_{j, j}\}$ needed for the simplified E-Step update can be obtained by solving $K + 1$ separate linear systems.  These systems can be solved in parallel by considering the matrix equation $\bold A \bold X = \bold B$ with inputs $\bold A \in \R^{D \times D}$ and  $\bold B \in \R^{D \times (K+1)}$ defined as follows:  
\begin{align}
\bold A &:= \beta \bd \Phi^\top \bd \Phi + \mathrm{diag}\{\bidx \alpha t\}, \nonumber \\
\bold B &:= 
\begin{bmatrix}
\bd p_1 \given \bd p_2 \given \ldots \given \bd p_K \given \beta \bd \Phi^\top \bd y \label{mat-eq}
\end{bmatrix}.
\end{align}  
If we enumerate the columns of the solution matrix $\bold X \in \R^{D \times (K + 1)}$ as $\bd x_1, \bd x_2, \ldots, \bd x_K, \bd \mu$, then our desired quantities for the simplified E-Step are $\bd \mu$ and $\bd s$, as calculated by Eq. \eqref{diag-estimate}.  We can then perform the M-Step update in Eq. \eqref{mstep} as
\begin{align}
\idx[j] \alpha {t+1} = \frac{1}{\mu_j^2 + s_j},
\end{align}
completely avoiding the need to compute or invert the covariance matrix $\bd \Sigma$.  Algorithm \ref{cofem} gives the full CoFEM algorithm.    

The diagonal estimator in Eq. \eqref{diag-estimate} is unbiased for any ${K \geq 1}$, yet its variance is proportional to $1 / K$, as we will show in Section \ref{probe-vec}.  We will provide theoretical justification that small $K$ suffices for CoFEM in practice, and that this $K$ can be constant with respect to the dimensionality $D$ for a large class of compressed sensing matrices $\bd \Phi$.      

%

\begin{algorithm}[!t!]
\caption{\textsc{CovarianceFreeEM}($\bd y$, $\bd \Phi$, $\beta$, $T$, $K$)} \label{cofem}
\begin{algorithmic}[1]
\State{Initialize $\alpha_j^{(1)} \gets 1$ for $j = 1, \ldots, D$.}
\For {$t = 1, 2, \ldots, T$}
    \State{\emph{// Simplified E-Step }}
    \State{Define $\bold A \gets \beta \bd \Phi^\top \bd \Phi + \mathrm{diag}\{\bidx \alpha t\}$.}
    \State{Draw $\bd p_1, \bd p_2, \ldots, \bd p_K \sim $ Rademacher distribution.}
    \State{Define $\bold B \gets [\bd p_1 \given \bd p_2 \given \ldots \given \bd p_K \given \beta \bd \Phi^\top \bd y]$.}
    \State{$[\bd x_1 \given \bd x_2 \given \ldots \given \bd x_K \given \bd \mu] \gets $ \textsc{LinearSolver}($\bold A, \bold B$).}
    \State{Compute $s_j \gets 1 / K\sum_{k=1}^K p_{k, j} \cdot x_{k, j}$ for $j = 1, \ldots, D$.}
    \State{\emph{// M-Step}}
    \If{$t < T$}
    \State {Update $\alpha_j^{(t + 1)} \gets 1 / (\mu_j^2 + s_j)$ for $j = 1, \ldots, D$.}
    \EndIf
\EndFor \\
\Return {$ \bidx \alpha T, \bd \mu, \bd s$}
\end{algorithmic}
\end{algorithm}

\subsection{Parallel Conjugate Gradient} \label{parallel-cg-alg}
Among potential options for the linear solver in Algorithm \ref{cofem}, we choose the conjugate gradient (CG) algorithm due to its efficiency and flexibility \cite{hestenes1952methods, shewchuk1994introduction}.  CG is an iterative approach with a series of matrix-vector multiplication steps to solve a linear system $\bold A \bd x = \bd b$.   At each step, CG does not need a physical manifestation of $\bold A$; it only requires a way to apply $\bold A$ to an arbitrary vector $\bd v$ to yield $\bold A \bd v$.  
For SBL, $\bold A := \beta \bd \Phi^\top \bd \Phi + \mathrm{diag}\{\bidx \alpha t\}$, which implies the complexity of CG is dominated by the time $\tau_D$ it takes to apply $\bd \Phi$ and its transpose to $\bd v$.
In many applications, $\bd \Phi$ is a structured and matrix-free operation.  Examples include convolution, discrete cosine transform, Fourier transform, and wavelet transform -- all of which require at most $\tau_D = O(D \log D)$-time~\cite{oppenheim2001discrete}.

CG can also easily generalize to solving \emph{multiple} linear systems $\bold A \bold X = \bold B$ by simply replacing the matrix-vector multiplications with matrix-matrix multiplications.  For faster computing, these operations can be parallelized on GPUs. In addition, CG is space-efficient and only needs $O(D)$-space to solve the linear system;  this is the minimum requirement for any solver given that the output $\bd x \in \R^D$.  

\subsection{Preconditioning} \label{sec:preconditioning}
The time complexity of CG depends on the number of CG steps $U$. It is guaranteed that $U\leq D$~\cite{hestenes1952methods}, yet $D$ can be very large for high-dimensional problems. In general, if $\bold A$ has a small condition number $\kappa(\bold A) := \lambda_\text{max}(\bold A) / \lambda_\text{min}(\bold A)$ where $\lambda_\text{max}(\bold A)$ and $\lambda_\text{min}(\bold A)$ are the largest and smallest eigenvalues of $\bold A$, then $U \ll D$ steps are needed to find an $\bhat x$ such that ${\norm{\bold A \bhat x - \bd b}_2 / \norm{\bd b}_2 \leq \epsilon_\text{max}}$ for small $\epsilon_\text{max}$. However, optimizing the SBL objective function pushes many diagonal elements of $\bold A$ to $\infty$, resulting in large $\kappa(\bold A)$ and necessitating large $U$.

To resolve this issue, we incorporate a \emph{preconditioner} matrix $\bold M \in \R^{D \times D}$ in CG.  Instead of directly solving $\bold A \bd x = \bd b$, we solve an equivalent system $\bold A' \bd x' = \bd b'$, where $\bold A' := \bold M^{-1/2} \bold A \bold M^{-1/2}$, $\bd b' := \bold M^{-1/2} \bd b$ and $\bd x' := \bold M^{1/2} \bd x$. We consider two criteria for choosing $\bold M$ in CoFEM. First, we want $\kappa(\bold A') \ll \kappa(\bold A)$ to reduce the number of steps $U$ that are necessary to achieve small $\epsilon_\text{max}$. Next, we want to maintain the scalability of CoFEM (e.g. absence of matrix inversion, $O(D)$-space complexity) by avoiding a dense matrix for $\bold M$. 

We thus propose to use a \emph{diagonal preconditioner} $\bold M = \mathrm{diag}\{\beta \bd \theta + \bidx \alpha t\}$, where $\bd \theta \in \R^D$ is a set of customizable positive values.  This $\bold M$ is easy to invert, only requires $O(D)$-space, and can be quickly applied to vectors.  By varying $\bd \theta$, we can arrive at different choices for $\bold M$.  For example, setting $\theta_j = \sum_{i=1}^N \mathrm \Phi_{i, j}^2$ leads to the popular \emph{Jacobi preconditioner} satisfying $\mathrm M_{j, j} = \mathrm A_{j, j}$ \cite{shewchuk1994introduction}. In Section \ref{sec:theory-u}, we provide novel theoretical analysis for our diagonal preconditioner within the context of SBL.  We illustrate that for a large class of compressed sensing dictionaries $\bd \Phi$, setting $\theta_j = 1$ for all $j$ is a favorable choice that leads to small $\kappa(\bold A')$ and enables $U$ to be constant with respect to the dimensionality $D$.  

Algorithm \ref{cg} summarizes the parallel CG algorithm for inputs $\bold A \in \R^{D \times D}$, $\bold M \in \R^{D \times D}$, and $\bold B \in \R^{D \times Q}$, where $Q$ is the number of parallel systems.  For CoFEM, we have $Q = K + 1$.  The computation is dominated by line 5, in which $\bold A$ is applied to vectors stored as columns of a matrix.


\begin{algorithm}[!t!]
\caption{\textsc{ParallelConjGradient}($\bold A, \bold B, \bold M, U, \epsilon_\text{max}$)} \label{cg}
\begin{algorithmic}[1]
\State{Initialize $\bold X$ as a $D \times Q$ matrix of all zeros.} 
\State{Initialize $\bold R \gets \bold B$ and $\bold P \gets \bold B$ and $\bold W \gets \bold M^{-1} \bold B$.}
\State{Compute $\rho_q \gets \sum_{j=1}^D \mathrm R_{j, q} \cdot \mathrm W_{j, q}$ for $q = 1, \ldots, Q$.}
\For {$u = 1, 2, \ldots, U$}
    \State{Compute $\bd \Psi \gets \bold A \bold P$.  \emph{// Apply matrix.}}
    \State{Compute $\pi_q \gets \sum_{j=1}^D \mathrm P_{d, q} \cdot \Psi_{d, q}$ for $q = 1, \ldots, Q$.}
    \State{Compute $\gamma_q \gets \rho_q / \pi_q$ for $q = 1, \ldots, Q$.}
    \State{Update $\bold X \gets \bold X + \bold P \bold \Gamma$, where $\bold \Gamma = \mathrm{diag}\{\bd \gamma\}$.}
    \State{Update $\bold R \gets \bold R -  \bold \Psi \bold \Gamma$, where $\bold \Gamma = \mathrm{diag}\{\bd \gamma\}$.}
    \State{Let $\delta \gets ||\bold R||_F / ||\bold B||_F$, where $F$ is Frobenius norm.}
    \If{$\delta \leq \epsilon_\text{max}$}
    	\State{\Return{$\bold X$}}
    \EndIf
    \State{Compute $\bold W \gets \bold M^{-1} \bold R$. \emph{// Apply preconditioner.}}
    \State{Let $\rho^\text{old}_q \gets \rho_q$ for $q = 1, \ldots, Q$.}
    \State{Compute $\rho_q \gets \sum_{j=1}^D \mathrm R_{j, q} \cdot \mathrm W_{j, q}$ for $q = 1, \ldots, Q$.}
    \State{Compute $\eta_q \gets \rho_q / \rho^\text{old}_q$ for $q = 1, \ldots, Q$.}
    \State{Update $\bold P \gets \bold R + \bold P \bold H$, where $\bold H = \mathrm{diag}\{\bd \eta\}$.}
\EndFor \\
\Return {$ \bold X$}
\end{algorithmic}
\end{algorithm}

\subsection{Complexity Comparisons}
Each of the $T$ iterations of CoFEM requires at most $U$ steps of CG -- in which we apply $\bd \Phi$ (and $\bd \Phi^\top$) in $\tau_D$-time to $K$ vectors -- giving us an overall time complexity of $O(T \tau_D U K)$.  CoFEM's space complexity is dominated by CG, which requires $O(D)$-space for each of the $(K+1)$ systems. Table \ref{tab:complex} shows the complexities of CoFEM and other SBL inference schemes. While many other methods improve upon EM, they introduce dependencies on $N$ or $d$, which typically grow with $D$.  For example, if the size of the signal $\bd z$ is doubled, we may also expect the number of measurements $N$ to be doubled (to achieve same reconstruction error), as well as the number $d$ of non-zero values  in $\bd z$. Thus, increasing $D$ compounds the increase in complexities of these algorithms.  In contrast, CoFEM's dependencies on $U$ and $K$ can be held constant as $D$ increases, which we demonstrate in Section~\ref{sec:simulated}.

\begin{table}
\centering
\caption{Computational complexities of SBL inference schemes.}\label{tab:complex}
\begin{tabular}{lcc}
\toprule
\textbf{Method} & \textbf{Time} & \textbf{Space}  \\ 
\hline
EM \cite{tipping2001sparse} & $O(T D^3)$ & $O(D^2)$ \\
IRLS \cite{wipf2010iterative} & $O(T (DN^2 + N^3))$ & $O(D^2)$\\
AMP \cite{fang2016two} & $O(T  DNT_\text{amp})$ & $O(DN)$ \\
VI \cite{duan2017fast} & $O(T \tau_D)$ & $O(D)$ \\
Seq \cite{tipping2003fast} & $O(Dd^2)$ & $O(D + d^2)$ \\
CoFEM (ours) & $O(T \tau_D U K)$ & $O(DK)$ \\
\bottomrule
\end{tabular}
\end{table}

\subsection{CoFEM for SBL Extensions} \label{sec:sbl-extensions-cofem}
To further highlight the flexibility of CoFEM, we show how it can handle two common extensions of the SBL model.


\subsubsection{Multi-Task Learning} \label{multi-task-cofem}
{In multi-task learning, there are $L$ different sparse vector recovery problems that one wishes to solve at once.  These problems may have different observation-dictionary pairs $(\bd y_1, \bd \Phi_1), (\bd y_2, \bd \Phi_2), \ldots, (\bd y_L, \bd \Phi_L)$, yet the tasks are related in the sense that their corresponding vectors $\bd z_1, \bd z_2, \ldots, \bd z_L$ have {similar non-zero supports}.  Some examples include multiple measurements vector (MMV) problems \cite{wipf2007empirical}, multi-task compressed sensing \cite{ji2008multitask}, and sparse Bayesian learning with complex numbers \cite{wu2014complex}.

A simple way to enforce joint sparsity among all tasks in SBL is to have them share a common $\bd \alpha$ vector:
\begin{align}
\bd z_\ell &\sim \mathcal{N}(\bd 0, \mathrm{diag}\{\bd \alpha\}^{-1}), & \ell = 1, 2, \ldots, L, \nonumber \\
\bd y_\ell &\sim \mathcal{N}(\bd \Phi_\ell \bd z_\ell, 1 / \beta \mathbf I), & \ell = 1, 2, \ldots, L. \label{multi-model}
\end{align}
Learning takes place through the task-separable objective:
\begin{align}
\max_{\bd \alpha} \log p(\bd y_1, \bd y_2, \ldots, \bd y_L \given \bd \alpha) = \sum_{\ell = 1}^L \log p(\bd y_\ell\given \bd \alpha). \hspace{-0.3em} \label{multi-mle2}
\end{align}
}

To optimize Eq. \eqref{multi-mle2}, EM runs a E-Step for each task $\ell$,
\begin{align}
\bd \mu_\ell = \beta \bd \Sigma_\ell \bd \Phi_\ell^\top \bd y_\ell, & & \bd \Sigma_\ell = (\beta \bd \Phi_\ell^\top \bd \Phi_\ell + \text{diag}\{\bidx \alpha t\})^{-1},
\end{align}   
followed by a M-Step to combine these moments,
\begin{align}
\idx[j] \alpha {t+1} = \frac{L}{\sum_{\ell = 1}^L \mu_{\ell, j}^2 + \Sigma_{\ell, j, j}}.
 \end{align}

To accelerate EM, CoFEM can simply replace the E-Step for each task $\ell$ with a covariance-free version, as described in Section \ref{simp-estep}.  We can further parallelize these $L$ E-Steps by solving their $L \cdot (K+1)$ systems all at once through Alg. \ref{cg}.   

\subsubsection{Non-Negativity Constraints} \label{sec:sbl-nonneg-cofem} 
{
In some applications, we may want to enforce non-negativity on $\bd z$. In these cases,
we can use an independent \emph{rectified Gaussian} prior $\mathcal{N}^R(0, 1 / \alpha_j)$ for each component $z_j$ of $\bd z$ \cite{nalci2018rectified}, which places zero probability mass on the negative values. Specifically, we have 
\begin{align}
p(z_j \given \alpha_j) = \begin{cases}
\sqrt{2 \alpha_j / \pi} \exp(- \alpha_j z_j^2 / 2), & z_j > 0,\\ 
1/2, &z_j = 0,\\
0, & z_j < 0. 
\end{cases}
\end{align}


}

Due to conjugacy between the rectified Gaussian and Gaussian distributions, the posterior $p(\bd z \given \bd y, \bhat \alpha)$ is also a rectified Gaussian.  However, this posterior's density function is not analytically tractable, so we can follow \cite{nalci2018rectified} and approximate it with a diagonal rectified Gaussian whose second moment is 
 \begin{align}
 \E_{p(\bd z \given \bd y, \bhat \alpha)}[z_j^2] = \mu_j^2 + \Sigma_{j, j} + \mu_j \cdot \sqrt{\frac{\Sigma_{j, j}}{\pi}} \cdot \frac{\exp(- \xi_j^2 )}{\mathrm{erfc} (- \xi_j )}, \label{nonneg-sbl}
 \end{align}
 where $\xi_j = \mu_j / \sqrt{2\Sigma_{j, j}}$ and $\mathrm{erfc}(x) = 2 / \sqrt{\pi}\int_{x}^\infty\exp(-t^2) dt$ is the complimentary error function.  
To compute Eq. \eqref{nonneg-sbl}, CoFEM can replace $\Sigma_{j, j}$ with $s_j$ from Eq. \eqref{diag-estimate}.  The M-Step then updates $\alpha_j$ as the reciprocal of Eq. \eqref{nonneg-sbl}.

\section{Theoretical Analysis of CoFEM} \label{sec:theory}
The two main hyperparameters of CoFEM are the number of probe vectors $K$ and the number of conjugate gradient steps $U$.  These values determine the per-iteration time complexity of CoFEM as $O(UK\tau_D)$ and its space complexity as $O(DK)$.  Thus, it is important to control $U$ and $K$.  In this section, we present new theoretical results illustrating that, for a large class of dictionaries $\bd \Phi$, these hyperparameters can be kept small even as the dimensionality of the problem $D$ grows very large.  

The ultimate goal of SBL is to solve a Bayesian variant of the sparse coding problem in which we (a) identify which $z_j = 0$ and (b) provide uncertainty quantification for the non-zero $z_j$.  To achieve true sparsity for a particular $z_j$, it is necessary for its prior parameter $\alpha_j \to \infty$, which is only attained as the number of iterations $t \to \infty$.  Thus, our study of $U$ and $K$ is based on the following SBL Convergence property:

\begin{definition}[SBL Convergence] \label{sbl-conv-def}
In Alg. \ref{cofem}, consider the sequence of iterates $\bidx \alpha t$ for $t = 1, 2, \ldots,$ and let $\bhat \alpha := \lim_{t \to \infty} \bidx \alpha t$.  Then, $(\mathcal{S}, \mathcal{U}, \bhat \alpha)$-convergence is satisfied for SBL if the indices $\mathbb{N}_D := \{1, 2, \ldots, D\}$ can be partitioned into an ``active" set $\mathcal{S} \subseteq \mathbb{N}_D$ and an ``inactive" set $\mathcal{U} := \mathbb{N}_D \setminus \mathcal{S}$, where $\hat{\alpha}_j > 0$ is finite if $j \in \mathcal{S}$ and $\hat{\alpha}_{j} = \infty$ if $j \in \mathcal{U}$.  
\end{definition}      

Of course, it is impossible to run $t \to \infty$ iterations in practice and to the best of the authors' knowledge, there does not exist a formal proof characterizing EM's behavior after finitely many iterations in the context of SBL.  However, many works within the SBL literature \cite{tipping2001sparse, wipf2004sparse, ji2008bayesian, ji2008multitask} have illustrated (and even relied on) a well-accepted phenomenon in which the inactive $\alpha_{j}$ grow very large and can be treated as ``reaching infinity" after $T$ finite iterations.  This justifies the practical applicability of Definition \ref{sbl-conv-def} and our ensuing theoretical analysis. 

\noindent\textbf{Notation} For any positive integer $P$, let $\mathbb{N}_P := \{1, 2, \ldots, P\}$.  For any vector $\bd v \in \R^P$ and set $\mathcal{B} \subseteq \mathbb{N}_P$, define the sub-vector $\bd v_\mathcal{B} := [v_b \given b \in \mathcal{B}] \in \R^{|\mathcal{B}|}$.  Similarly, for any matrix $\bold M = [\bidx v 1, \ldots, \bidx v P] \in \R^{Q \times P}$, define the sub-matrix $\bold M_\mathcal{B} := [\bidx v b \given b \in \mathcal{B}] \in \R^{Q \times |\mathcal{B}|}$.  For any two sets $\mathcal{A} \subseteq \N_Q$ and $\mathcal{B} \subseteq \N_P$, define the sub-matrix block $\bold M_{\mathcal{A}, \mathcal{B}} := [\bidx[{\mathcal{A}}] v b \given b \in \mathcal{B}] \in \R^{|\mathcal{A}| \times |\mathcal{B}|}$.  Let $\norm{\bd v}_2$ denote the Euclidean norm of vector $\bd v$ and $\norm{\bd v}_{\bold M} := \sqrt{\bd v^\top \bold M \bd v}$ denote the $\bold M$-weighted norm of $\bd v$.  Let $\norm{\bold M}_2$ denote the spectral norm (i.e. largest singular value) of matrix $\bold M$.  Let $\sigma_\text{min}(\bold M), \lambda_\text{max}(\bold M)$ and $\lambda_\text{min}(\bold M)$ denote the smallest singular value, largest eigenvalue, and smallest eigenvalue of $\bold M$, respectively.  For a matrix $\bold M$, let $\kappa(\bold M) := \norm{\bold M}_2 / \sigma_\text{min}(\bold M)$  be the condition number of $\bold M$.  

\subsection{A Theory for the Number of Probe Vectors $K$} \label{probe-vec}
First, we analyze the dependency of the diagonal estimator $\bd s$ on the number of probe vectors $K$.
We aim to characterize the standard deviation (i.e. the square root of the variance) of each $s_j$, which leads to this lemma:
\begin{lemma}
 \label{diag-var-lemma}
Let $\bold M \in \R^{D \times D}$.  Consider the estimator $\bd s$ defined in Prop. \ref{der}, where $\bd p_1, \ldots, \bd p_K$ are independent Rademacher variables.  Then, the standard deviation $\nu_j$ of $s_j$ is 
\begin{align}
\nu_j := \sqrt{\E[(s_j - \E[s_j])^2]} = \sqrt{\frac{1}{K} \sum_{j' \neq j} \mathrm M_{j, j'}^2}. \label{diag-var}
\end{align}
\end{lemma}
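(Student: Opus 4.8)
The plan is to build directly on the expansion of $s_j$ already derived in the proof of Proposition \ref{der}. Since the probes are Rademacher, $p_{k,j}^2 = 1$ for all $k,j$, so the denominator $\sum_{k=1}^K p_{k,j}^2 = K$, Eq. \eqref{single-diag-estimator} collapses to Eq. \eqref{diag-estimate}, and Eq. \eqref{diag-expansion} becomes $s_j = \mathrm{M}_{j,j} + \sum_{j' \neq j} \mathrm{M}_{j,j'} \cdot Y_{j'}$ with $Y_{j'} := \frac{1}{K}\sum_{k=1}^K p_{k,j} p_{k,j'}$. Because $\E[s_j] = \mathrm{M}_{j,j}$ by Proposition \ref{der}, we get $s_j - \E[s_j] = \sum_{j' \neq j} \mathrm{M}_{j,j'} Y_{j'}$, hence $\E[(s_j - \E[s_j])^2] = \sum_{j' \neq j}\sum_{j'' \neq j} \mathrm{M}_{j,j'}\mathrm{M}_{j,j''}\,\E[Y_{j'}Y_{j''}]$.

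The core of the argument is to show $\E[Y_{j'}Y_{j''}] = \frac{1}{K}$ when $j' = j''$ and $0$ otherwise, for any $j', j'' \neq j$. Expanding, $\E[Y_{j'}Y_{j''}] = \frac{1}{K^2}\sum_{k=1}^K\sum_{k'=1}^K \E[p_{k,j}p_{k,j'}p_{k',j}p_{k',j''}]$, and I split into two cases. When $k \neq k'$, the factors indexed by $k$ are independent of those indexed by $k'$, so the expectation factors as $\E[p_{k,j}p_{k,j'}]\,\E[p_{k',j}p_{k',j''}]$; since $j' \neq j$ and each Rademacher coordinate is mean-zero and independent, $\E[p_{k,j}p_{k,j'}] = \E[p_{k,j}]\E[p_{k,j'}] = 0$, killing every off-diagonal term. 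When $k = k'$, I use $p_{k,j}^2 = 1$ to reduce the summand to $\E[p_{k,j'}p_{k,j''}]$, which equals $1$ if $j' = j''$ and $\E[p_{k,j'}]\E[p_{k,j''}] = 0$ if $j' \neq j''$. Summing the $K$ surviving diagonal terms gives $\E[Y_{j'}Y_{j''}] = \frac{1}{K^2}\cdot K \cdot \mathbb{1}\{j'=j''\}$.

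Substituting back, only the $j' = j''$ terms survive, so $\E[(s_j - \E[s_j])^2] = \frac{1}{K}\sum_{j' \neq j}\mathrm{M}_{j,j'}^2$, and taking the square root yields the claimed formula for $\nu_j$ in Eq. \eqref{diag-var}. The only mildly delicate point is the bookkeeping of the fourth-order Rademacher moments: one must invoke independence correctly across distinct probe indices $k \neq k'$ and across distinct coordinates $j, j', j''$, and keep track of the fact that the $j$-coordinate always appears squared (contributing a factor of $1$) rather than as a mean-zero factor. Everything else is routine algebra.
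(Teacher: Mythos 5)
Your proof is correct and follows essentially the same route as the paper: expand $s_j - \E[s_j]$ into the off-diagonal sum, square it, and use the Rademacher moment structure ($p_{k,j}^2 = 1$, independence across coordinates and across probes) to show that only the $j' = j''$ diagonal terms survive, each contributing $1/K$. Your version simply makes the fourth-moment bookkeeping (the $k \neq k'$ versus $k = k'$ case split) more explicit than the paper, which states $\E[g_{j,j'}^2] = K$ and $\E[g_{j,j'} g_{j,j''}] = 0$ without elaboration; no substantive difference.
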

\begin{proof}
Within Eq. \eqref{diag-var}, we substitute the expression for $s_j$ from Eq. \eqref{diag-expansion} and the fact that $\E[s_j] = \mathrm M_{j, j}$ to yield
\begin{align}
\nu_j &= \sqrt{\E\left[\left(\sum_{j' \neq j} \mathrm{M}_{j, j'}  \cdot \frac{\sum_{k=1}^K p_{k, j} \cdot  p_{k, j'}}{\sum_{k=1}^K p_{k, j}^2}\right)^2\right]} \nonumber \\
&= \sqrt{\E\left[\sum_{j' \neq j} \sum_{j'' \neq j} \mathrm{M}_{j, j'} \cdot \mathrm{M}_{j, j''} \cdot \frac{g_{j, j'}}{g_{j, j}} \cdot \frac{g_{j, j''}}{g_{j, j}}\right]},\label{diag-var-final}
\end{align}
where we define $g_{j, \ell} := \sum_{k=1}^K p_{k, j} \cdot p_{k, \ell}$ for all $(j, \ell) \in \N_D \times \N_D$.  In the denominator, we have $g_{j, j} = K$ for all $j$ since $p^2 = 1$ for a Rademacher variable.  In the numerator, if $j' = j''$, we have $\E[g_{j, j'} \cdot g_{j, j''}] = \E[g_{j, j'}^2] = K$.  Otherwise, if $j \neq j''$, $\E[g_{j, j'} \cdot g_{j, j''}] = 0$. Thus, Eq. \eqref{diag-var-final} simplifies to Eq. \eqref{diag-var}.  
\end{proof}

Lemma \ref{diag-var-lemma} tells us that the standard deviation of our estimator decreases with $K$ and increases with the norm of the off-diagonal entries.  Analyzing Lemma \ref{diag-var-lemma} within the context of SBL leads to our first main theoretical result, as stated below.
\begin{theorem}
\label{var-bound-thm}
Let $\bidx \Sigma t := (\beta \bd \Phi^\top \bd \Phi + \mathrm{diag}\{\bidx \alpha t\})^{-1}$ be the SBL covariance matrix at the $t$-th iteration of Alg. \ref{cofem}.  Let $\bidx s t$ be the Rademacher diagonal estimator for $\bidx \Sigma t$ defined in Eq. \eqref{diag-estimate} with $K$ probe vectors.  Let $\idx[j] \nu t$ denote the standard deviation of $\idx[j] s t$.  Assume that $(\mathcal{S}, \mathcal{U}, \bhat \alpha)$-convergence is satisfied.  Then, for any inactive index $j \in \mathcal{U}$, we have 
\begin{align}
\lim_{t \to \infty} \idx[j] \nu t  = 0, \label{inactive-var}
\end{align}   
and for any active index $j \in \mathcal{S}$, we have
\begin{align}
\lim_{t \to \infty} \idx[j] \nu t  \leq \frac{1}{\sqrt{K}} \cdot \frac{\inf_{\bd \Theta \in \mathcal{O}} \norm{\bd \Theta^{-1} \bd \Phi^\top_{\mathcal{S}} \bd \Phi_\mathcal{S} - \bold I}_2}{\beta \cdot \sigma^2_\mathrm{min}(\bd \Phi_\mathcal{S})}, \label{active-var}
\end{align}   
where $\mathcal{O}$ is the set of $|\mathcal{S}| \times |\mathcal{S}|$ diagonal matrix with positive diagonal elements and $\bold I$ is the identity matrix.
\end{theorem}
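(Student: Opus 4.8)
The plan is to route everything through the entrywise limit of the covariance matrix $\bidx \Sigma t$ and then, on the active block, to compare that limit against a judiciously chosen diagonal matrix. By Lemma~\ref{diag-var-lemma} applied with $\bold M = \bidx \Sigma t$, we have $\idx[j] \nu t = \frac{1}{\sqrt{K}}\sqrt{\sum_{j' \neq j}((\bidx \Sigma t)_{j,j'})^2}$, so it suffices to understand the off-diagonal entries of $\bidx \Sigma t$ as $t \to \infty$. To obtain the limit of $\bidx \Sigma t$, I would reorder the indices so that $\mathcal{S}$ comes first, write $\bidx \Sigma t = (\bold A^{(t)})^{-1}$ with $\bold A^{(t)} := \beta \bd \Phi^\top \bd \Phi + \mathrm{diag}\{\bidx \alpha t\}$, and conjugate by the diagonal matrix $\bold P^{(t)}$ that equals the identity on the $\mathcal{S}$-block and $\mathrm{diag}\{(\idx[j] \alpha t)^{-1/2} : j \in \mathcal{U}\}$ on the $\mathcal{U}$-block. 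Using $(\mathcal{S},\mathcal{U},\bhat \alpha)$-convergence from Definition~\ref{sbl-conv-def} (so $\idx[j] \alpha t \to \hat{\alpha}_j$ finite for $j \in \mathcal{S}$ and $\idx[j] \alpha t \to \infty$ for $j \in \mathcal{U}$), the congruent matrix $\bold B^{(t)} := \bold P^{(t)} \bold A^{(t)} \bold P^{(t)}$ converges entrywise to the invertible block-diagonal matrix with $\mathcal{S}$-block $\bold A_\infty := \beta \bd \Phi_\mathcal{S}^\top \bd \Phi_\mathcal{S} + \mathrm{diag}\{\bhat \alpha_{\mathcal{S}}\}$ (positive definite, since $\hat{\alpha}_j > 0$ on $\mathcal{S}$) and $\mathcal{U}$-block $\bold I$. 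Continuity of matrix inversion then gives $(\bold B^{(t)})^{-1} \to$ the block-diagonal matrix with blocks $\bold A_\infty^{-1}$ and $\bold I$, and undoing the conjugation via $\bidx \Sigma t = \bold P^{(t)} (\bold B^{(t)})^{-1} \bold P^{(t)}$ --- where the $\mathcal{U}$-entries of $\bold P^{(t)}$ tend to $0$ --- shows that $\bidx \Sigma t$ converges entrywise to the matrix $\bd \Sigma^\star$ that is $\bold A_\infty^{-1}$ on the $\mathcal{S}\times\mathcal{S}$ block and zero in every entry touching $\mathcal{U}$. In particular each $\lim_{t\to\infty}\idx[j]\nu t$ exists.

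For an inactive index $j \in \mathcal{U}$, all entries of row $j$ of $\bd \Sigma^\star$ vanish, so $\sum_{j'\neq j}((\bidx \Sigma t)_{j,j'})^2 \to 0$ and hence $\lim_{t\to\infty}\idx[j]\nu t = 0$, which is Eq.~\eqref{inactive-var}.

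For an active index $j \in \mathcal{S}$, the entries of row $j$ of $\bd \Sigma^\star$ indexed by $\mathcal{U}$ also vanish, so $\lim_{t\to\infty}\idx[j]\nu t = \frac{1}{\sqrt{K}}\sqrt{\sum_{j'\in\mathcal{S},\, j'\neq j}((\bold A_\infty^{-1})_{j,j'})^2}$. The engine of the bound is the elementary fact that subtracting any diagonal matrix $\bd D$ leaves the off-diagonal entries of a row of $\bold A_\infty^{-1}$ untouched, so $\sqrt{\sum_{j'\neq j}((\bold A_\infty^{-1})_{j,j'})^2} \leq \norm{(\bold A_\infty^{-1} - \bd D)\bold e_j}_2 \leq \norm{\bold A_\infty^{-1} - \bd D}_2$. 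Writing $\bold G := \bd \Phi_\mathcal{S}^\top \bd \Phi_\mathcal{S}$ and $\bd \Lambda := \mathrm{diag}\{\bhat \alpha_{\mathcal{S}}\}$, I would, for an arbitrary $\bd \Theta \in \mathcal{O}$, take the specific positive diagonal matrix $\bd D := (\beta \bd \Theta + \bd \Lambda)^{-1}$ and apply the resolvent identity $\bold X^{-1} - \bold Y^{-1} = \bold Y^{-1}(\bold Y - \bold X)\bold X^{-1}$ to get $\bold A_\infty^{-1} - \bd D = \beta\,\bd D\,(\bd \Theta - \bold G)\,\bold A_\infty^{-1} = -\beta\,\bd D\,\bd \Theta\,(\bd \Theta^{-1}\bold G - \bold I)\,\bold A_\infty^{-1}$. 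Taking spectral norms and using submultiplicativity together with $\norm{\bd D\,\bd \Theta}_2 \leq 1/\beta$ (it is diagonal with $j$-th entry $\theta_j / (\beta\theta_j + \hat{\alpha}_j) \leq 1/\beta$) and $\norm{\bold A_\infty^{-1}}_2 = 1/\lambda_{\mathrm{min}}(\bold A_\infty) \leq 1/(\beta\,\sigma_{\mathrm{min}}^2(\bd \Phi_\mathcal{S}))$ (because $\bd \Lambda$ is positive semidefinite, so $\lambda_{\mathrm{min}}(\bold A_\infty) \geq \beta\,\sigma_{\mathrm{min}}^2(\bd \Phi_\mathcal{S})$) yields $\norm{\bold A_\infty^{-1} - \bd D}_2 \leq \norm{\bd \Theta^{-1}\bd \Phi_\mathcal{S}^\top \bd \Phi_\mathcal{S} - \bold I}_2 / (\beta\,\sigma_{\mathrm{min}}^2(\bd \Phi_\mathcal{S}))$. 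Since $\bd \Theta \in \mathcal{O}$ was arbitrary, taking the infimum and multiplying by $1/\sqrt{K}$ gives Eq.~\eqref{active-var}.

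I expect two points to demand the most care. The fussier one is the entrywise convergence $\bidx \Sigma t \to \bd \Sigma^\star$: although $\mathrm{diag}\{\bidx \alpha t\}$ blows up on $\mathcal{U}$ and $\bold A^{(t)}$ itself has no limit, one must verify that the symmetric rescaling by $\bold P^{(t)}$ neutralizes exactly the diverging entries, that $\bold B^{(t)}$ is invertible for every $t$, and that its limit is invertible, so that continuity of inversion applies. The genuinely creative step is guessing the diagonal comparison $\bd D = (\beta \bd \Theta + \mathrm{diag}\{\bhat \alpha_{\mathcal{S}}\})^{-1}$: this is precisely the choice that makes the resolvent identity telescope so the $\mathrm{diag}\{\bhat \alpha_{\mathcal{S}}\}$ contributions cancel and what remains is the normalized Gram factor $\bd \Theta^{-1}\bd \Phi_\mathcal{S}^\top \bd \Phi_\mathcal{S} - \bold I$ of the statement; a naive guess such as $\bd D = \mathrm{diag}(\bold A_\infty^{-1})$ would not expose this structure. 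One should also note that the bound is informative only when $\bd \Phi_\mathcal{S}$ has full column rank, i.e. $\sigma_{\mathrm{min}}(\bd \Phi_\mathcal{S}) > 0$, which holds automatically for the restricted-isometry dictionaries studied in the remainder of Section~\ref{sec:theory}.
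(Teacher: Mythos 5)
Your proposal is correct and follows essentially the same route as the paper's proof in Appendix \ref{var-bound-proof}: the limit $\bhat \Sigma$ is block-diagonal with active block $(\beta \bd \Phi_\mathcal{S}^\top \bd \Phi_\mathcal{S} + \mathrm{diag}\{\bhat \alpha_\mathcal{S}\})^{-1}$, the off-diagonal row norm is bounded by $\norm{\bhat \Sigma_{\mathcal{S},\mathcal{S}} - \bold \Psi}_2$ for a diagonal comparison matrix, and the choice $\bold \Psi = (\beta \bd \Theta + \mathrm{diag}\{\bhat \alpha_\mathcal{S}\})^{-1}$ with the resolvent/telescoping identity and the bound $\norm{\bhat \Sigma_{\mathcal{S},\mathcal{S}}}_2 \leq 1/(\beta \sigma_\mathrm{min}^2(\bd \Phi_\mathcal{S}))$ is exactly the paper's change of variables. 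The only cosmetic difference is that you establish the entrywise limit of $\bidx \Sigma t$ via a diagonal rescaling argument, whereas the paper uses the block matrix inversion formula and lets the inverse Schur complement vanish; both yield the same limit.
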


Theorem \ref{var-bound-thm} offers several insights in the limit of EM iterations: (1) the estimator becomes deterministic with zero standard deviation for the inactive indices, (2) $K$ only affects the estimator's standard deviation for the active indices, and (3) if $\bd \Phi^\top_\mathcal{S} \bd \Phi_\mathcal{S}$ is close to any diagonal matrix $\bd \Theta$ (i.e. the columns of $\bd \Phi_\mathcal{S}$ are close to orthogonal), then the standard deviation for the active indices converge to a small quantity.  The proof of Theorem \ref{var-bound-thm} is given in Appendix \ref{var-bound-proof}.   

\subsection{A Theory for the Number of Conjugate Gradient Steps $U$} \label{sec:theory-u}
Next, we analyze the number of CG steps $U$ needed for convergence. We build on the following well-known result~\cite{shewchuk1994introduction}:
\begin{lemma}[CG Convergence] 
Consider the CG algorithm for solving $\bold A \bd x = \bd b$, where $\bold A \in \R^{D \times D}$ is a positive definite matrix and $\bd b \in \R^D$.  Let $\bd x_0 \in \R^D$ be the initial solution.  Let $\bd x_u$ denote the algorithm's solution and $\bd r_u := \bd b - \bd A \bd x_u$ denote the algorithm's residual at the $u$-th step of CG.  Then, 
\begin{align}
\norm{\bd r_u}_{\bold A^{-1}} \leq 2 \left(\frac{\sqrt{\kappa(\bold A)} - 1}{\sqrt{\kappa(\bold A)} + 1}\right)^{u} \norm{\bd r_0}_{\bold A^{-1}}, \label{cg-result}
\end{align}
where $\norm{\bd r}_{\bold A^{-1}} := \sqrt{\bd r^\top \bold A^{-1} \bd r}$ for any vector $\bd r \in \R^D$.
\end{lemma}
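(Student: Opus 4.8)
The plan is to run the classical optimal-polynomial argument for Krylov-subspace solvers. Write $\bd x^\star := \bold A^{-1} \bd b$ for the exact solution and $\bd e_u := \bd x^\star - \bd x_u$ for the error, so that $\bd r_u = \bold A \bd e_u$ and, by the definition of the $\bold A^{-1}$-norm, $\norm{\bd r_u}_{\bold A^{-1}}^2 = \bd e_u^\top \bold A \bold A^{-1} \bold A\, \bd e_u = \norm{\bd e_u}_{\bold A}^2$; in particular $\norm{\bd r_0}_{\bold A^{-1}} = \norm{\bd e_0}_{\bold A}$. Hence it suffices to bound $\norm{\bd e_u}_{\bold A}$. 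I would then invoke the two structural facts about CG: (i) the iterate satisfies $\bd x_u \in \bd x_0 + \mathrm{span}\{\bd r_0, \bold A\bd r_0, \ldots, \bold A^{u-1}\bd r_0\}$, so that $\bd e_u = q(\bold A)\bd e_0$ for some polynomial $q$ of degree at most $u$ with $q(0) = 1$; and (ii) among all such iterates CG minimizes $\norm{\bd e_u}_{\bold A}$. Together these give the variational bound
\begin{equation}
\norm{\bd e_u}_{\bold A} = \min_{q \in \mathcal{P}_u,\, q(0)=1} \norm{q(\bold A)\bd e_0}_{\bold A}, \nonumber
\end{equation}
where $\mathcal{P}_u$ is the set of real polynomials of degree at most $u$.

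Next I would diagonalize. Since $\bold A$ is symmetric positive definite, write $\bold A = \bold Q\, \mathrm{diag}\{\lambda_1, \ldots, \lambda_D\}\, \bold Q^\top$ with all $\lambda_i \in [\lambda_\text{min}(\bold A), \lambda_\text{max}(\bold A)]$, and expand $\bd e_0 = \sum_i c_i \bd q_i$ in the orthonormal eigenbasis. A short computation gives $\norm{q(\bold A)\bd e_0}_{\bold A}^2 = \sum_i \lambda_i\, q(\lambda_i)^2\, c_i^2 \le \big(\max_i |q(\lambda_i)|\big)^2 \sum_i \lambda_i c_i^2 \le \big(\max_{\lambda \in [\lambda_\text{min}(\bold A),\, \lambda_\text{max}(\bold A)]} |q(\lambda)|\big)^2 \norm{\bd e_0}_{\bold A}^2$. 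Combining with the variational bound and $\norm{\bd e_0}_{\bold A} = \norm{\bd r_0}_{\bold A^{-1}}$ reduces the whole problem to the scalar min--max estimate
\begin{equation}
\min_{q \in \mathcal{P}_u,\, q(0)=1}\ \max_{\lambda \in [\lambda_\text{min}(\bold A),\, \lambda_\text{max}(\bold A)]} |q(\lambda)| \ \le\ 2\left(\frac{\sqrt{\kappa(\bold A)} - 1}{\sqrt{\kappa(\bold A)} + 1}\right)^{u}. \nonumber
\end{equation}

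Establishing this last inequality is the main obstacle, and it is handled by Chebyshev polynomials. Writing $\kappa = \kappa(\bold A)$ for brevity and letting $T_u$ be the degree-$u$ Chebyshev polynomial of the first kind, I would exhibit the feasible candidate $q(\lambda) = T_u\!\big(\tfrac{\lambda_\text{max}(\bold A) + \lambda_\text{min}(\bold A) - 2\lambda}{\lambda_\text{max}(\bold A) - \lambda_\text{min}(\bold A)}\big) \big/ T_u\!\big(\tfrac{\lambda_\text{max}(\bold A) + \lambda_\text{min}(\bold A)}{\lambda_\text{max}(\bold A) - \lambda_\text{min}(\bold A)}\big)$: its argument in the numerator sweeps $[-1,1]$ as $\lambda$ sweeps $[\lambda_\text{min}(\bold A), \lambda_\text{max}(\bold A)]$, so $|T_u| \le 1$ there, while $q(0) = 1$ by construction; hence the min--max value is at most $1/\big|T_u\big(\tfrac{\kappa+1}{\kappa-1}\big)\big|$. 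Finally I would use the closed form $T_u(y) = \tfrac{1}{2}\big[(y + \sqrt{y^2-1})^u + (y - \sqrt{y^2-1})^u\big] \ge \tfrac{1}{2}(y + \sqrt{y^2-1})^u$ for $y \ge 1$, together with the algebraic identity that $y = \tfrac{\kappa+1}{\kappa-1}$ gives $y + \sqrt{y^2-1} = \tfrac{\sqrt{\kappa}+1}{\sqrt{\kappa}-1}$, to conclude $1/|T_u(y)| \le 2\big(\tfrac{\sqrt{\kappa}-1}{\sqrt{\kappa}+1}\big)^u$. Chaining the displayed inequalities yields the claim. The delicate points are justifying the Krylov/optimality reduction (facts (i)--(ii)) and verifying the Chebyshev extremal estimate carefully; the eigendecomposition step is routine.
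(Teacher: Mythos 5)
The paper does not prove this lemma at all: it imports it as a well-known result with a citation to Shewchuk's CG notes, so there is no in-paper argument to compare against. Your proposal is the standard and correct proof of that cited result --- the reduction $\norm{\bd r_u}_{\bold A^{-1}} = \norm{\bd e_u}_{\bold A}$, the Krylov/optimality characterization giving the polynomial min--max problem, and the shifted-and-scaled Chebyshev candidate with $y + \sqrt{y^2-1} = (\sqrt{\kappa}+1)/(\sqrt{\kappa}-1)$ --- and every step checks out (the only cosmetic caveat is the degenerate case $\kappa(\bold A)=1$, where the Chebyshev construction divides by zero but the bound holds trivially since CG converges in one step).
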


\begin{corollary}
\label{coro-epsilon}
Let CG with matrix $\bold A$ and any $\bd b \in \R^D$ start with the initialization $\bd x_0 = \bd 0$.  Let $\epsilon := \norm{\bd r_U}_{\bold A^{-1}} /\norm{\bd b}_{\bold A^{-1}}$ denote the relative residual error of CG after $U$ steps\footnote{Theoretical results for CG often define $\epsilon$ in terms of the norm weighted by $\bold A^{-1}$ (or $\bold A$), even though $\epsilon_2 := \norm{\bd r_U}_2/\norm{\bd b}_2$ is used to determine convergence in software.  We follow this convention while noting that one can exploit the relation $\epsilon_2 \leq \epsilon \sqrt{\kappa(\bold A)}$ to extend our results for $\epsilon_2$.}.  Then, 
\begin{align}
\epsilon \leq 2 \exp(-U / \sqrt{\kappa(\bold A)}). \label{epsilon-bound}
\end{align}  
\end{corollary}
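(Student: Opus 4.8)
The plan is to specialize the CG Convergence Lemma to the zero initialization and then absorb the geometric contraction factor into a clean exponential rate. First I would note that with $\bd x_0 = \bd 0$ the initial residual is $\bd r_0 = \bd b - \bold A \bd x_0 = \bd b$, so that $\norm{\bd r_0}_{\bold A^{-1}} = \norm{\bd b}_{\bold A^{-1}}$. Substituting this into Eq.~\eqref{cg-result} and dividing through by $\norm{\bd b}_{\bold A^{-1}}$ gives immediately
\begin{align}
\epsilon = \frac{\norm{\bd r_U}_{\bold A^{-1}}}{\norm{\bd b}_{\bold A^{-1}}} \leq 2 \left(\frac{\sqrt{\kappa(\bold A)} - 1}{\sqrt{\kappa(\bold A)} + 1}\right)^{U}.
\end{align}

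It then remains to bound the contraction base by an exponential. Writing $\kappa := \kappa(\bold A)$ for brevity, I would rewrite $\frac{\sqrt{\kappa} - 1}{\sqrt{\kappa} + 1} = 1 - \frac{2}{\sqrt{\kappa} + 1}$ and apply the elementary inequality $1 - x \leq e^{-x}$, valid for all real $x$, with $x = \frac{2}{\sqrt{\kappa}+1}$, to get $\frac{\sqrt{\kappa}-1}{\sqrt{\kappa}+1} \leq \exp\!\left(-\frac{2}{\sqrt{\kappa}+1}\right)$. Since $\bold A$ is positive definite, $\kappa \geq 1$, hence $\sqrt{\kappa} \geq 1$ and $2\sqrt{\kappa} \geq \sqrt{\kappa}+1$, i.e. $\frac{2}{\sqrt{\kappa}+1} \geq \frac{1}{\sqrt{\kappa}}$. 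Combining, $\frac{\sqrt{\kappa}-1}{\sqrt{\kappa}+1} \leq \exp(-1/\sqrt{\kappa})$; raising both (nonnegative) sides to the $U$-th power and multiplying by $2$ yields Eq.~\eqref{epsilon-bound}.

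There is no real obstacle here: the statement is essentially a one-line corollary of the preceding lemma once the zero initialization is inserted, and the only mild care needed is chaining the two elementary facts $1 - x \leq e^{-x}$ and $\kappa \geq 1$ to pass from the ratio $\tfrac{\sqrt{\kappa}-1}{\sqrt{\kappa}+1}$ to the more convenient rate $e^{-1/\sqrt{\kappa}}$. One could retain the sharper exponent $\tfrac{2}{\sqrt{\kappa}+1}$ if desired, but the stated form $2\exp(-U/\sqrt{\kappa(\bold A)})$ is the one that feeds cleanly into the analysis of $U$ in Section~\ref{sec:theory-u}.
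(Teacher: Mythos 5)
Your proof is correct and follows essentially the same route as the paper's: insert $\bd r_0 = \bd b$ into the CG convergence lemma, then pass from $\bigl(\tfrac{\sqrt{\kappa}-1}{\sqrt{\kappa}+1}\bigr)^U$ to $\exp(-U/\sqrt{\kappa})$ via the elementary inequality $1 - x \leq e^{-x}$ together with $\tfrac{2}{\sqrt{\kappa}+1} \geq \tfrac{1}{\sqrt{\kappa}}$. You merely swap the order of the two elementary steps (exponentiate first, then compare exponents, rather than first bounding the ratio by $1 - 1/\sqrt{\kappa}$), and you spell out the $\kappa \geq 1$ justification that the paper leaves implicit.
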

\begin{proof}
Since $\bd r_0 = \bd b$, we can apply Eq. \eqref{cg-result} to obtain
\begin{align}
\epsilon \leq 2 \left(\frac{\sqrt{\kappa(\bold A)} - 1}{\sqrt{\kappa(\bold A)} + 1}\right)^{U} \leq 2 \left(1 - \frac{1}{\sqrt{\kappa(\bold A)}}\right)^{U}. \label{ineq}
\end{align}
The inequality $1 - 1/x \leq \exp(-1/x)$ holds for any $x \in \R$.  Taking $x = \sqrt{\kappa(\bold A)}$ in Eq. \eqref{ineq} gives the result.
\end{proof}

Corollary \ref{coro-epsilon} indicates that the relative residual error of CG decreases exponentially with $U$, yet the precise exponent depends on $\kappa(\bold A)$.  In the $t$-th iteration of CoFEM, we wish to solve linear systems with $\boldidx A t := \beta \bd \Phi^\top \bd \Phi + \mathrm{diag}\{\bidx \alpha t\}$. However, since $\bd \alpha_j^{(t)}\rightarrow \infty$ for $j\in \mathcal{U}$, this leads to $\kappa(\boldidx A t) \to \infty$ and a vacuous bound on $\epsilon$ for any finite number of steps $U$.
Therefore, CoFEM introduces a preconditioner $\bold M$ to construct a new CG matrix $\boldidx {A'} {t}$ with a reduced condition number.  In our second main theoretical result, we analyze $\kappa(\boldidx {A'} t)$, the bound that it induces on error, and the implications for $U$.
\begin{theorem} \label{kappa-bound}
Let $\boldidx A t := \beta \bd \Phi^\top \bd \Phi + \mathrm{diag}\{\bidx \alpha t\}$ denote the SBL inverse-covariance matrix at the $t$-th iteration of Alg. \ref{cofem}.  Let $\boldidx M t := \mathrm{diag}\{\beta \bd \theta + \bidx \alpha t\}$ denote the preconditioner, where $\bd \theta \in \R^D$ is a vector of positive values.  Define the preconditioned matrix $\boldidx {A'} t := (\boldidx M t)^{-1/2} \boldidx A t (\boldidx M t)^{-1/2}$.  Let $\bidx b t \in \R^D$ be any vector and $\bidx {b'} t := (\boldidx M t)^{-1/2} \bidx b t$.  Let $\idx \epsilon t$ be the relative residual error after running $U$ conjugate gradient steps to solve the system $\boldidx {A'} t \bidx {x'} t = \bidx {b'} t$ with $\bidx[0] {x'} t = \bd 0$.    
Given $(\mathcal{S}, \mathcal{U}, \bhat \alpha)$-convergence, it follows that  
\begin{align}
\lim_{t \to \infty} \idx \epsilon t \leq 2\exp \left(-U \sqrt{\frac{1 - \norm{\bd \Theta^{-1} \bd \Phi_\mathcal{S}^\top \bd \Phi_\mathcal{S} - \bold I}_2}{1 + \norm{\bd \Theta^{-1} \bd \Phi_\mathcal{S}^\top \bd \Phi_\mathcal{S} - \bold I}_2}}\right), \label{epsilon-eq}
\end{align}  
where $\bold \Theta = \mathrm{diag}\{\bd \theta_\mathcal{S}\}$ and $\bold I$ is the identity matrix.
\end{theorem}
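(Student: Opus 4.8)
The plan is to combine Corollary~\ref{coro-epsilon}, applied to the preconditioned system, with an asymptotic bound on its condition number. Corollary~\ref{coro-epsilon} gives, for every $t$ and every $\bidx b t$, that $\idx \epsilon t \leq 2\exp(-U/\sqrt{\kappa(\boldidx{A'}{t})})$, so by monotonicity of $\exp$ it suffices to prove that $\lim_{t\to\infty} \kappa(\boldidx{A'}{t}) \leq (1+\rho)/(1-\rho)$, where $\rho := \norm{\bd\Theta^{-1}\bd\Phi_\mathcal{S}^\top\bd\Phi_\mathcal{S} - \bold I}_2$. We may assume $\rho < 1$, since otherwise the claimed bound is vacuous (recall $\idx\epsilon t \leq 2$ always). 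The idea is that $(\mathcal{S},\mathcal{U},\bhat\alpha)$-convergence forces $\boldidx{A'}{t}$ to converge to a block-diagonal matrix whose conditioning is controlled by $\rho$.

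First I would pass to the limit of $\boldidx{A'}{t}$. Writing $\boldidx A t$ and the block-diagonal $\boldidx M t$ in $2\times 2$ block form along $\mathbb{N}_D = \mathcal{S}\cup\mathcal{U}$: on the $\mathcal{S}$ block, $(\bidx\alpha t)_\mathcal{S}\to\bhat\alpha_\mathcal{S}$ is finite and positive, so $(\boldidx{A'}{t})_{\mathcal{S},\mathcal{S}}\to\bold G := \bold D^{-1/2}\big(\beta\bd\Phi_\mathcal{S}^\top\bd\Phi_\mathcal{S} + \mathrm{diag}\{\bhat\alpha_\mathcal{S}\}\big)\bold D^{-1/2}$ with $\bold D := \mathrm{diag}\{\beta\bd\theta_\mathcal{S} + \bhat\alpha_\mathcal{S}\}$; on the $\mathcal{U}$ block, every entry of $\beta\bd\theta_\mathcal{U} + (\bidx\alpha t)_\mathcal{U}$ diverges, so squeezing $\beta\bd\Phi_\mathcal{U}^\top\bd\Phi_\mathcal{U}$ between two copies of $\mathrm{diag}\{\beta\bd\theta_\mathcal{U} + (\bidx\alpha t)_\mathcal{U}\}^{-1/2}$ kills that contribution while $\mathrm{diag}\{\idx[j]\alpha t/(\beta\theta_j + \idx[j]\alpha t)\}_{j\in\mathcal{U}}\to\bold I$, giving $(\boldidx{A'}{t})_{\mathcal{U},\mathcal{U}}\to\bold I_\mathcal{U}$; and the cross block $(\boldidx{A'}{t})_{\mathcal{S},\mathcal{U}} = \mathrm{diag}\{\beta\bd\theta_\mathcal{S} + (\bidx\alpha t)_\mathcal{S}\}^{-1/2}\,\beta\bd\Phi_\mathcal{S}^\top\bd\Phi_\mathcal{U}\,\mathrm{diag}\{\beta\bd\theta_\mathcal{U} + (\bidx\alpha t)_\mathcal{U}\}^{-1/2}$ has a bounded left factor, a fixed middle factor, and a right factor tending to $\bold 0$, so it vanishes (likewise its transpose). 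Thus $\boldidx{A'}{t}$ converges to the block-diagonal matrix with diagonal blocks $\bold G$ and $\bold I_\mathcal{U}$, which is symmetric and positive definite because $\mathrm{diag}\{\bhat\alpha_\mathcal{S}\}$ is positive definite. Since $\lambda_\text{max}$ and $\lambda_\text{min}$ are continuous and this limit is positive definite, $\lim_{t\to\infty}\kappa(\boldidx{A'}{t})$ is the condition number of the limit, so it is enough to show all of its eigenvalues lie in $[1-\rho,1+\rho]$.

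The eigenvalue of $\bold I_\mathcal{U}$ is $1\in[1-\rho,1+\rho]$, so the remaining task is the estimate $\norm{\bold G - \bold I}_2\leq\rho$, which I would prove by telescoping. Set $\bold T := \mathrm{diag}\{\beta\bd\theta_\mathcal{S}\}$ and $\bold F := \beta\bd\Phi_\mathcal{S}^\top\bd\Phi_\mathcal{S}$; since $\bold D = \bold T + \mathrm{diag}\{\bhat\alpha_\mathcal{S}\}$, the $\mathrm{diag}\{\bhat\alpha_\mathcal{S}\}$ terms cancel in $\bold G - \bold I$, leaving $\bold G - \bold I = \bold D^{-1/2}(\bold F - \bold T)\bold D^{-1/2}$. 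Writing $\bold F - \bold T = \bold T^{1/2}\bold C\bold T^{1/2}$ with the symmetric $\bold C := \bold T^{-1/2}\bold F\bold T^{-1/2} - \bold I$, and using that the diagonal matrices $\bold D,\bold T$ commute, this becomes $\bold G - \bold I = \bold J\bold C\bold J$ with $\bold J := \bold D^{-1/2}\bold T^{1/2} = \mathrm{diag}\{\sqrt{\beta\theta_j/(\beta\theta_j + \hat\alpha_j)}\}_{j\in\mathcal{S}}$, whose entries lie in $(0,1]$; hence $\norm{\bold J}_2\leq 1$ and $\norm{\bold G-\bold I}_2\leq\norm{\bold C}_2$. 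Finally, $\bold T^{-1/2}\bold F\bold T^{-1/2}$ is symmetric and, by a diagonal conjugation, similar to $\bold T^{-1}\bold F = \bd\Theta^{-1}\bd\Phi_\mathcal{S}^\top\bd\Phi_\mathcal{S}$ (the factors of $\beta$ cancelling), so $\norm{\bold C}_2$ equals the spectral radius of $\bd\Theta^{-1}\bd\Phi_\mathcal{S}^\top\bd\Phi_\mathcal{S} - \bold I$, which is at most $\rho$. This gives $\norm{\bold G-\bold I}_2\leq\rho$, hence $\lim_{t\to\infty}\kappa(\boldidx{A'}{t})\leq(1+\rho)/(1-\rho)$, and plugging this into Corollary~\ref{coro-epsilon} yields Eq.~\eqref{epsilon-eq}.

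I expect the main obstacle to be the block-limit bookkeeping rather than anything deep: carefully showing that the $\mathcal{U}$ block tends to the identity and the cross blocks vanish (in operator norm, uniformly over the matrix entries), and justifying the interchange of the limit with $\kappa$. The conceptually important step, which the $\bold J$-sandwich makes precise, is that the $(\bidx\alpha t)$ part of the diagonal preconditioner exactly matches the $\mathrm{diag}\{\bidx\alpha t\}$ term in $\boldidx A t$, so in the limit the conditioning of the active block is governed solely by how far $\bd\Phi_\mathcal{S}^\top\bd\Phi_\mathcal{S}$ sits from the diagonal $\bd\Theta$, while the extra $\bhat\alpha_\mathcal{S}$ mass in the denominator can only pull the eigenvalues of $\bold G$ toward $1$. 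I do not anticipate needing any hypothesis beyond $(\mathcal{S},\mathcal{U},\bhat\alpha)$-convergence and $\rho<1$.
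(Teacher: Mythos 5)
Your proposal is correct and follows essentially the same route as the paper's proof in Appendix B: reduce to bounding $\lim_{t\to\infty}\kappa(\boldidx{A'}{t})$ via Corollary \ref{coro-epsilon}, show the limit is block-diagonal with identity on the $\mathcal{U}$ block and a perturbation of $\bold I$ on the $\mathcal{S}$ block, and bound that perturbation's eigenvalues by $\norm{\bd\Theta^{-1}\bd\Phi_\mathcal{S}^\top\bd\Phi_\mathcal{S}-\bold I}_2$ using the cancellation of the $\bhat\alpha_\mathcal{S}$ terms, a diagonal similarity, and the domination $\mathrm{diag}\{\beta\bd\theta_\mathcal{S}+\bhat\alpha_\mathcal{S}\}\succeq\beta\bd\Theta$. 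Your $\bold J$-sandwich is a cosmetic rearrangement of the paper's step $\norm{\hat{\bold M}_{\mathcal{S},\mathcal{S}}^{-1}\bold\Delta_{\mathcal{S},\mathcal{S}}}_2\le\norm{(\beta\bd\Theta)^{-1}\bold\Delta_{\mathcal{S},\mathcal{S}}}_2$, and your explicit handling of the degenerate case $\rho\ge 1$ and of the limit interchange is a small bonus the paper leaves implicit.
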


 Eq. \eqref{epsilon-eq} indicates that faster convergence is achieved for $\bd \theta_\mathcal{S} \in \R^{|\mathcal{S}|}$ that minimizes $\norm{\mathrm{diag}\{\bd \theta_\mathcal{S}\}^{-1} \bd \Phi_\mathcal{S}^\top \bd \Phi_\mathcal{S} - \bold I}_2$.  In practice, the support $\mathcal{S}$ is not known in advance, so we may instead choose $\bd \theta \in \R^{D}$ to minimize $\norm{\mathrm{diag}\{\bd \theta\}^{-1} \bd \Phi^\top \bd \Phi - \bold I}_2$.  For example, if $\bd \Phi^\top \bd \Phi$ is a diagonal matrix (i.e. all columns of $\bd \Phi$ are mutually orthogonal), then the optimal choice is $\theta_{j} = \sum_{i=1}^N \mathrm \Phi_{i, j}^2$, which corresponds to the Jacobi preconditioner.  The proof of Theorem \ref{kappa-bound} is given in Appendix \ref{kappa-bound-proof}.

\subsection{A Theory for $U$ and $K$ for Compressed Sensing Matrices}

We now show that our bounds in Theorems \ref{var-bound-thm} and \ref{kappa-bound} can be simplified for a large class of matrices $\bd \Phi$ satisfying the restricted isometry property (RIP) and commonly employed in compressed sensing applications.  

\begin{definition}[Restricted Isometry Property] 
\label{rip}
Let $\bd \Phi \in \R^{N \times D}$, $d \leq D$, and $\delta > 0$.  Then, $\bold \Phi$ satisfies $(d, \delta)$-RIP if for every set $\mathcal{C} \subseteq \{1, \ldots, D\}$ of size $|\mathcal{S}| = d$ and every vector $\bd v \in \R^d$,       
\begin{align}
(1 - \delta) \norm{\bd v}_2^2 \leq \norm{\bd \Phi_\mathcal{C} \bd v}_2^2 \leq (1 + \delta) \norm{\bd v}_2^2. \label{rip-eq}
\end{align}
\end{definition} 

\begin{corollary} 
\label{rip-coro}
Let $\bd \Phi \in \R^{N \times D}$ satisfy $(d, \delta)$-RIP.  For any $\mathcal{C} \subseteq \N_D$ with $|\mathcal{C}| = d$, let $\bd \Delta_{\mathcal{C},\mathcal{C}}  := {\bd \Phi_\mathcal{C}^\top \bd \Phi_\mathcal{C}} - \bold I$.  Then, $\norm{ \bold \Delta_{\mathcal{C},\mathcal{C}}}_2 \leq \delta$.
\end{corollary}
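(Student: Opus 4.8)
The plan is to reduce the statement to the variational characterization of the spectral norm of a symmetric matrix. First I would observe that $\bd \Delta_{\mathcal{C},\mathcal{C}} = \bd \Phi_\mathcal{C}^\top \bd \Phi_\mathcal{C} - \bold I$ is symmetric, since $\bd \Phi_\mathcal{C}^\top \bd \Phi_\mathcal{C}$ is a Gram matrix and $\bold I$ is symmetric. Consequently, $\norm{\bd \Delta_{\mathcal{C},\mathcal{C}}}_2$ equals the largest absolute value of its eigenvalues, which in turn equals $\sup_{\bd v \neq \bd 0} |\bd v^\top \bd \Delta_{\mathcal{C},\mathcal{C}} \bd v| / \norm{\bd v}_2^2$ (the extreme values of the Rayleigh quotient of a symmetric matrix).

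Next I would rewrite the RIP condition of Definition \ref{rip} in quadratic-form notation. For any $\bd v \in \R^d$ we have $\norm{\bd \Phi_\mathcal{C} \bd v}_2^2 = \bd v^\top \bd \Phi_\mathcal{C}^\top \bd \Phi_\mathcal{C} \bd v$, so Eq. \eqref{rip-eq} becomes
\begin{align}
(1 - \delta)\norm{\bd v}_2^2 \leq \bd v^\top \bd \Phi_\mathcal{C}^\top \bd \Phi_\mathcal{C} \bd v \leq (1 + \delta)\norm{\bd v}_2^2. \nonumber
\end{align}
Subtracting $\norm{\bd v}_2^2 = \bd v^\top \bd v$ throughout and using $\bd v^\top \bd \Delta_{\mathcal{C},\mathcal{C}} \bd v = \bd v^\top \bd \Phi_\mathcal{C}^\top \bd \Phi_\mathcal{C} \bd v - \bd v^\top \bd v$ gives $-\delta \norm{\bd v}_2^2 \leq \bd v^\top \bd \Delta_{\mathcal{C},\mathcal{C}} \bd v \leq \delta \norm{\bd v}_2^2$, i.e. $|\bd v^\top \bd \Delta_{\mathcal{C},\mathcal{C}} \bd v| \leq \delta \norm{\bd v}_2^2$ for every $\bd v \in \R^d$.

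Combining the two observations, $\norm{\bd \Delta_{\mathcal{C},\mathcal{C}}}_2 = \sup_{\bd v \neq \bd 0} |\bd v^\top \bd \Delta_{\mathcal{C},\mathcal{C}} \bd v| / \norm{\bd v}_2^2 \leq \delta$, which is the claim. There is essentially no substantive obstacle here: the entire content is the elementary fact that for a symmetric matrix the operator norm coincides with the maximal magnitude of the Rayleigh quotient; I would simply cite this (or prove it in one line via the spectral theorem) and the corollary follows. The only thing to be careful about is to note explicitly that the RIP guarantee quantifies over \emph{all} subsets $\mathcal{C}$ of the given size $d$, so the bound holds uniformly in $\mathcal{C}$, and that it is enough to have $|\mathcal{C}| = d$ exactly (matching the hypothesis).
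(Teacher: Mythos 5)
Your proposal is correct and follows essentially the same route as the paper's own proof: both rewrite the RIP inequality as a bound on the Rayleigh quotient of $\bd \Delta_{\mathcal{C},\mathcal{C}}$ and then invoke the fact that the spectral norm of a symmetric matrix equals its largest absolute eigenvalue. No gaps or meaningful differences.
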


\begin{proof}
From Eq. \eqref{rip-eq}, we have for any vector $\bd v \in \R^d$,
\begin{align}
1 - \delta &\leq \frac{\bd v^\top \bd \Phi_\mathcal{C}^\top \bd \Phi_\mathcal{C} \bd v }{\bd v^\top \bd v} \leq 1 + \delta \implies \left| \frac{\bd v^\top \bold \Delta_{\mathcal{C}, \mathcal{C}} \bd v }{\bd v^\top \bd v} \right| \leq \delta. \label{spect-bound}
\end{align}
Eq. \eqref{spect-bound} bounds the Rayleigh quotient (and eigenvalues) of $\bold \Delta_{\mathcal{C}, \mathcal{C}}$ between $[-\delta, \delta]$.  The spectral norm of a symmetric matrix is equal to its largest absolute eigenvalue.
\end{proof}

\begin{corollary} \label{coro-k} In Theorem \ref{var-bound-thm}, let $\bd \Phi  \in \R^{N \times D}$ satisfy $(d, \delta)$-RIP, where $d = |\mathcal{S}|$.  Then, for $j \in \mathcal{S}$, the standard deviation of the diagonal estimator $\idx[j] \nu t$  satisfies
\begin{align}
\lim_{t \to \infty} \idx[j] \nu t  \leq \frac{1}{\sqrt{K}} \cdot \frac{\delta}{\beta (1 - \delta)}.
\end{align}
\end{corollary}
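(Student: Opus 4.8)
The plan is to directly combine Theorem~\ref{var-bound-thm} with the RIP bound from Corollary~\ref{rip-coro}. Starting from Eq.~\eqref{active-var}, the quantity to control is
\begin{align}
\frac{\inf_{\bd \Theta \in \mathcal{O}} \norm{\bd \Theta^{-1} \bd \Phi^\top_{\mathcal{S}} \bd \Phi_\mathcal{S} - \bold I}_2}{\beta \cdot \sigma^2_\mathrm{min}(\bd \Phi_\mathcal{S})}. \nonumber
\end{align}
The key observation is that the set $\mathcal{O}$ of positive diagonal matrices contains the identity matrix, so the infimum over $\mathcal{O}$ is upper-bounded by the value at $\bd \Theta = \bold I$, namely $\norm{\bd \Phi^\top_{\mathcal{S}} \bd \Phi_\mathcal{S} - \bold I}_2$. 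Since $|\mathcal{S}| = d$ and $\bd \Phi$ satisfies $(d, \delta)$-RIP, applying Corollary~\ref{rip-coro} with $\mathcal{C} = \mathcal{S}$ gives $\norm{\bd \Phi^\top_{\mathcal{S}} \bd \Phi_\mathcal{S} - \bold I}_2 \leq \delta$, which handles the numerator.

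For the denominator, I would bound $\sigma^2_\mathrm{min}(\bd \Phi_\mathcal{S})$ from below using the lower inequality in the RIP definition (Eq.~\eqref{rip-eq}): for every $\bd v \in \R^d$, $\norm{\bd \Phi_\mathcal{S} \bd v}_2^2 \geq (1 - \delta)\norm{\bd v}_2^2$, which by the variational characterization of the smallest singular value gives $\sigma^2_\mathrm{min}(\bd \Phi_\mathcal{S}) \geq 1 - \delta$. Putting these two bounds together and substituting into Eq.~\eqref{active-var} yields
\begin{align}
\lim_{t \to \infty} \idx[j] \nu t \leq \frac{1}{\sqrt{K}} \cdot \frac{\delta}{\beta(1-\delta)}, \nonumber
\end{align}
which is exactly the claimed inequality.

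There is essentially no serious obstacle here — this corollary is a routine specialization of Theorem~\ref{var-bound-thm}. The only points requiring a small amount of care are: (i) noting that $\bold I \in \mathcal{O}$ so that the infimum is legitimately bounded by its value at the identity; and (ii) correctly deriving the lower bound on $\sigma^2_\mathrm{min}(\bd \Phi_\mathcal{S})$ from the left-hand inequality of the RIP rather than from Corollary~\ref{rip-coro}, which only bounds the off-diagonal-inclusive deviation $\norm{\bd \Phi_\mathcal{S}^\top \bd \Phi_\mathcal{S} - \bold I}_2$ from above. (One could alternatively observe that $\lambda_\mathrm{min}(\bd \Phi_\mathcal{S}^\top \bd \Phi_\mathcal{S}) \geq 1 - \norm{\bd \Phi_\mathcal{S}^\top \bd \Phi_\mathcal{S} - \bold I}_2 \geq 1 - \delta$, which routes everything through Corollary~\ref{rip-coro} and is perhaps cleaner.) I implicitly use that $d \geq 1$ so $1 - \delta > 0$ is needed for the bound to be finite, which is the standard regime for RIP.
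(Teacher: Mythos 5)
Your proposal is correct and matches the paper's proof essentially step for step: take $\bd \Theta = \bold I$ in Eq.~\eqref{active-var}, bound the numerator by $\delta$ via Corollary~\ref{rip-coro}, and bound $\sigma^2_\mathrm{min}(\bd \Phi_\mathcal{S}) \geq 1 - \delta$ from the lower RIP inequality. The extra remarks you add (that $\bold I \in \mathcal{O}$ and the alternative route through $\lambda_\mathrm{min}$) are fine but not needed beyond what the paper already does.
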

\begin{proof}
In Eq. \eqref{active-var}, take $\bd \Theta = \bold I$, which reduces the numerator to $\norm{\bold \Phi^\top_\mathcal{S} \bold \Phi_\mathcal{S} - \bold I}_2$.  By Corollary \ref{rip-coro}, this quantity is at most $\delta$.  Finally, by Definition \ref{rip}, $\sigma_\text{min}^2(\bd \Phi_\mathcal{S}) \geq (1 - \delta)$.   
\end{proof}

\begin{corollary} \label{coro-u} In Theorem \ref{kappa-bound}, let $\bd \Phi \in \R^{N \times D}$ satisfy $(d, \delta)$-RIP, where $d = |\mathcal{S}|$.  Let $\theta_j = 1$ for all $j \in \N_D$.  Then, the relative residual error $\idx \epsilon t$ of conjugate gradient satisfies
\begin{align}
\lim_{t \to \infty} \idx \epsilon t \leq 2\exp \left(-U\sqrt{\frac{1-\delta}{1+\delta}}\right). 
\end{align}  
\end{corollary}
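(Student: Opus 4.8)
The plan is to specialize Theorem \ref{kappa-bound} to the RIP setting by substituting the prescribed preconditioner parameters and then invoking Corollary \ref{rip-coro} to bound the spectral-norm term that appears in the exponent.

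First I would observe that setting $\theta_j = 1$ for all $j \in \N_D$ makes the matrix $\bd \Theta = \mathrm{diag}\{\bd \theta_\mathcal{S}\}$ from Theorem \ref{kappa-bound} equal to the $|\mathcal{S}| \times |\mathcal{S}|$ identity matrix, so that $\bd \Theta^{-1} = \bold I$ and $\norm{\bd \Theta^{-1} \bd \Phi_\mathcal{S}^\top \bd \Phi_\mathcal{S} - \bold I}_2 = \norm{\bd \Phi_\mathcal{S}^\top \bd \Phi_\mathcal{S} - \bold I}_2$. Writing $\eta := \norm{\bd \Phi_\mathcal{S}^\top \bd \Phi_\mathcal{S} - \bold I}_2$, Theorem \ref{kappa-bound} then gives $\lim_{t\to\infty}\idx\epsilon t \leq 2\exp\!\bigl(-U\sqrt{(1-\eta)/(1+\eta)}\bigr)$. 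Next, since $\bd \Phi$ satisfies $(d,\delta)$-RIP and $|\mathcal{S}| = d$, Corollary \ref{rip-coro} applied with $\mathcal{C} = \mathcal{S}$ yields $\eta \leq \delta$.

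It then remains only to transfer this bound on $\eta$ into the exponential. The map $\eta \mapsto (1-\eta)/(1+\eta)$ is decreasing on $[0,1)$, hence $\eta \mapsto -U\sqrt{(1-\eta)/(1+\eta)}$ is nondecreasing, and composing with the increasing exponential shows that $2\exp\!\bigl(-U\sqrt{(1-\eta)/(1+\eta)}\bigr)$ is nondecreasing in $\eta$. Therefore replacing $\eta$ by the larger quantity $\delta$ only enlarges the right-hand side, yielding $\lim_{t\to\infty}\idx\epsilon t \leq 2\exp\!\bigl(-U\sqrt{(1-\delta)/(1+\delta)}\bigr)$, which is the claim. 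There is essentially no substantive obstacle here; the only step requiring a moment's care is this monotonicity observation that licenses substituting the RIP upper bound $\delta$ for $\eta$ inside the exponent — the rest is direct substitution into Theorem \ref{kappa-bound} together with the estimate of Corollary \ref{rip-coro}.
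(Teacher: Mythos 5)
Your proof is correct and follows the same route as the paper: substitute $\theta_j = 1$ so that $\bd \Theta = \bold I$ in Eq.~\eqref{epsilon-eq}, apply Corollary~\ref{rip-coro} to bound $\norm{\bd \Phi_\mathcal{S}^\top \bd \Phi_\mathcal{S} - \bold I}_2$ by $\delta$, and use the monotonicity of the bound in that quantity. The paper states this in one line; your explicit monotonicity check is the only detail it leaves implicit, and it is the right one.
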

\begin{proof}
This follows from applying Corollary \ref{rip-coro} to Eq. \eqref{epsilon-eq}.
\end{proof}

One may have expected that as $N$ and $D$ increase, CoFEM's hyperparameters $U$ and $K$ must also increase proportionally to ensure that $\nu_j$ and $\epsilon$ remain small.
However, Corollaries \ref{coro-k} and \ref{coro-u} illustrate that for RIP matrices\footnote{RIP is one mathematical notion for the idea of ``close to orthonormality" for a set of dictionary columns.  There exist other notions (e.g. incoherence, null space property) \cite{foucart2013}.  We conjecture there are bounds similar to Corollaries \ref{coro-k} and \ref{coro-u} that hold for matrices that satisfy these other properties.}, the bounds on $U$ and $K$ only depend on $\delta$ (not $N$ or $D$).  In compressed sensing, $\delta$ can be small even as $N$ and $D$ grow very large \cite{foucart2013}.  In Section \ref{sec:simulated}, we use this insight to demonstrate that even for very large $D$, CoFEM can accurately perform sparse coding with small constant values for $U$ and $K$.


\section{Simulated Experiments} \label{sec:simulated}
In this section, we run a series of experiments to compare the accuracy and scalability of CoFEM to that of other SBL inference methods across a broad range of different settings.  

\begin{figure*}
\begin{center}
\includegraphics[scale=0.47]{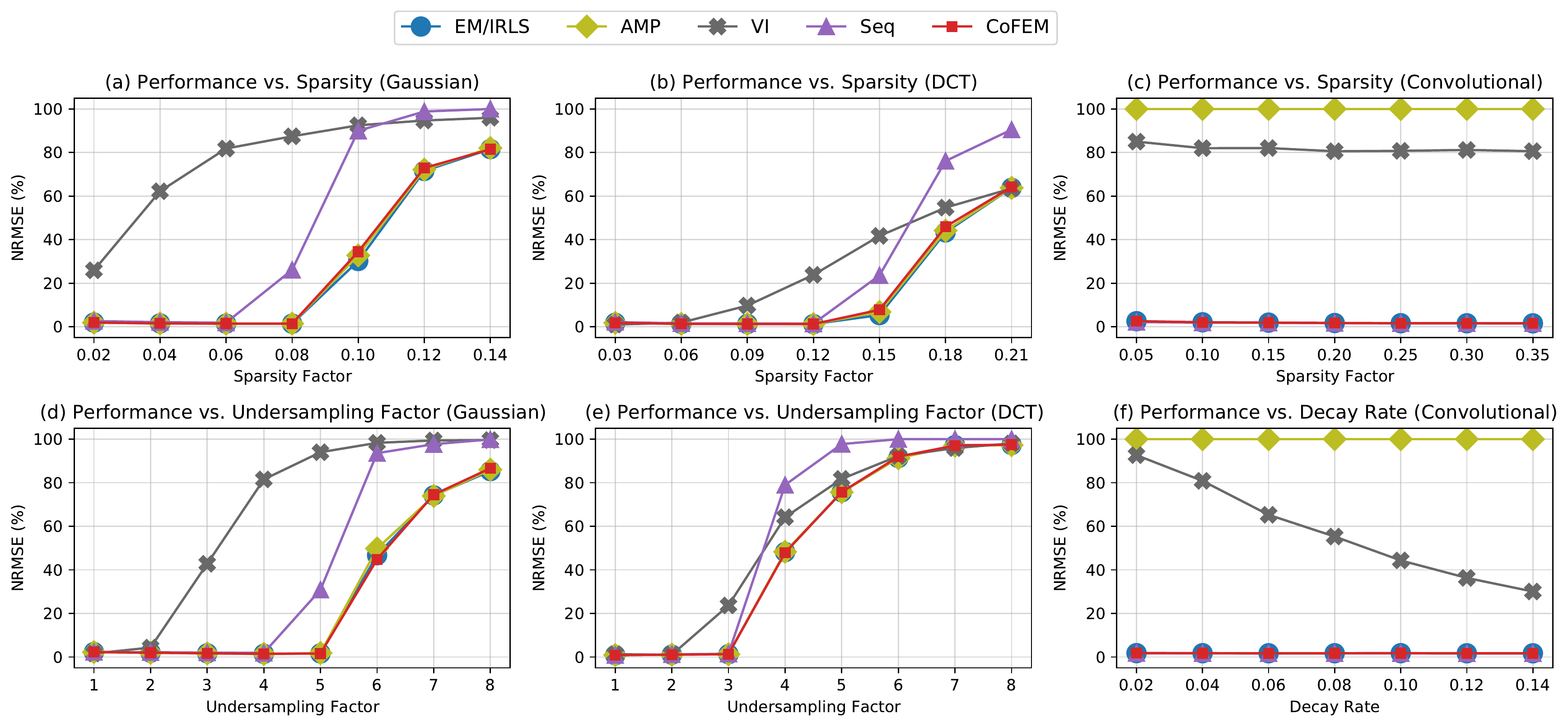} 
\caption{Comparing the accuracy of different SBL inference schemes.  Each point represents the mean of 25 trials.} \label{acc}
\end{center}
\end{figure*}

\subsection{Experimental Setup} \label{setup}
\subsubsection{General Structure} 
We design all simulations with a structure inspired by Section V-A of \cite{ji2008bayesian}.  
We form a ground-truth latent vector $\bd z^* \in \R^D$ of spikes by drawing $d$ of its components from a distribution $\mathcal{P}$ and setting the other $D - d$ components to zero.  The location of the spikes are chosen uniformly at random.  Given a dictionary $\bd \Phi \in \R^{N \times D}$, the observed data $\bd y \in \R^N$ is generated as $\bd y = \bd \Phi \bd z^* + \bd \varepsilon$, where $\bd \varepsilon \sim \mathcal{N}(0, \sigma^2\mathbf{I})\in \R^N$ with $\sigma = 0.01$.  The goal is to apply SBL for reconstructing $\bd z^*$.  Success is measured through minimization of normalized root mean squared error (NRMSE), $||\bhat z - \bd z^*||_2/||\bd z^*||_2 \times 100$,
where $\bhat z = \bd \mu$ is the mean of the distribution $p(\bd z \given \bd y, \bhat \alpha)$ upon convergence.  

\subsubsection{Dictionary} \label{sec:dict-types}
We consider three types of dictionaries $\bold \Phi$:
\begin{itemize}
\item \textbf{Dense Gaussian}: We draw each element $\Phi_{i, j}$ independently from $\mathcal{N}(0, 1/ N)$ to form a dense matrix $\bd \Phi$.  The spike distribution of $\bd z^*$ is $\mathcal{P} := \mathrm{Uniform}(-2, 2)$.
\item \textbf{DCT}: We let $\bd \Phi = \bold M \bold \Omega^{-1}$, where $\bold \Omega \in \R^{D \times D}$ is the matrix corresponding to the 1D discrete cosine transform of size $D$ and $\bold M \in \R^{N \times D}$ is an undersampling operator that selects $N$ out of $D$ components (where $N \leq D$).  The spike distribution is $\mathcal{P} := \mathcal{N}(0, 5)$.
\item \textbf{Convolutional}: We set $N = D$ and $\bd \Phi \in \R^{D \times D}$ to a convolution in which its columns are delayed (and truncated) repetitions of an exponentially decaying filter $\bd \phi \in \R^D$.  That is, $\phi_j := (1 - \rho)^{j-1}$ for $0<\rho<1$.  Thus, $\bd \Phi$ is a lower triangular matrix in which the $j$-th column is a concatenation of $j-1$ zeros and $\{\phi_1, \ldots, \phi_{N-(j-1)}\}$.   The spike distribution is $\mathcal{P} := \mathrm{Exponential}(1.5)$.
\end{itemize}

\subsection{Accuracy Analysis} \label{sec:cofem-em}
For our first analysis, we fix $D = 1024$ and evaluate the accuracy of various SBL inference schemes across different settings.  We compare among EM/IRLS \cite{tipping2001sparse}, AMP \cite{fang2016two}, VI \cite{duan2017fast}, Seq \cite{tipping2003fast}, and CoFEM. Note that EM and IRLS always yield the same result; IRLS is just the Woodbury identity (Eq. \eqref{woodbury}) applied to EM.  All EM-based methods (EM/IRLS, AMP, VI, CoFEM) are executed for $T = 50$ iterations.  AMP employs $T_\text{amp} = 10$ inner loops.  For CoFEM, Corollary \ref{coro-k} tells us that we can keep the number of probes $K$ very small since $\delta \approx 0$ and $\beta = 1 / (0.01)^2 = 10{,}000$.  We use $K = 20$ probe vectors, though we have found that even smaller values for $K$ do not change the results.  We employ $U = 400$ maximum CG steps with early termination if the residual error drops below the threshold $\epsilon_\text{max} = 10^{-4}$.  Results are displayed in Fig. \ref{acc}. 

\subsubsection{Performance vs. Sparsity} In Fig. \ref{acc}(a), we consider the dense Gaussian dictionary with $N = \lfloor D / 4 \rfloor$, which is typical in a \emph{compressed sensing} setting.
Let the \emph{sparsity factor} $f \in [0, 1]$ determine the number of non-zero coefficients in the latent signal $\bd z^*$ as $d = \lfloor f \cdot D\rfloor$. We vary $f$ and observe its impact on NRMSE.  At low $f$, all algorithms perform well except VI.  As $f$ increases, EM/IRLS, AMP, and CoFEM exhibit the same decay in performance, while Seq decays more rapidly.  In Fig. \ref{acc}(b), we consider the DCT dictionary with $N = \lfloor D / 3 \rfloor$ measurements, again varying $f$.  We see the same overall trend as in Fig. \ref{acc}(a).
In Fig. \ref{acc}(c), we compare NRMSE versus sparsity level $f$ for the convolutional dictionary.  The decay rate is set to $\rho = 0.04$.  EM/IRLS, Seq, and CoFEM have near-perfect NRMSE at all $f$. However, VI fails again due to its biased objective and AMP fails due to the convolutional dictionary not being zero-mean and sub-Gaussian.

\subsubsection{Performance vs. Undersampling} In Fig. \ref{acc}(d), we revisit the Gaussian dictionary and fix $f = 0.06$.  We vary the \emph{undersampling rate} $r > 0$, where $N = \lfloor D / r \rfloor$.  We see that EM/IRLS, AMP, and CoFEM have similar performance at all $r$, while VI and Seq degrade more rapidly with increasing $r$.  A similar trend is shown in Fig. \ref{acc}(e), in which we vary $r$ for the DCT dictionary while fixing $f = 0.12$.  

\subsubsection{Performance vs. Decay Rate} Finally, in Fig. \ref{acc}(f), we consider the convolutional dictionary with fixed $f = 0.2$ and vary the decay factor $\rho$.  Smaller $\rho$ leads to slower decay of the exponential filter, which increases the correlations between the columns of $\bd \Phi$.  We observe that EM/IRLS, Seq, and CoFEM are all robust to changes in $\rho$.  The performance of VI is heavily correlated with $\rho$, while AMP diverges for all values of $\rho$.   

CoFEM is the only algorithm that is as robust as EM/IRLS to changes in sparsity level, undersampling rate, and correlation between dictionary columns.  We attribute this robustness to the fact that CoFEM performs an unbiased estimation of the posterior variances, regardless of the dictionary structure.

\begin{figure*}
\begin{center}
\includegraphics[scale=0.47]{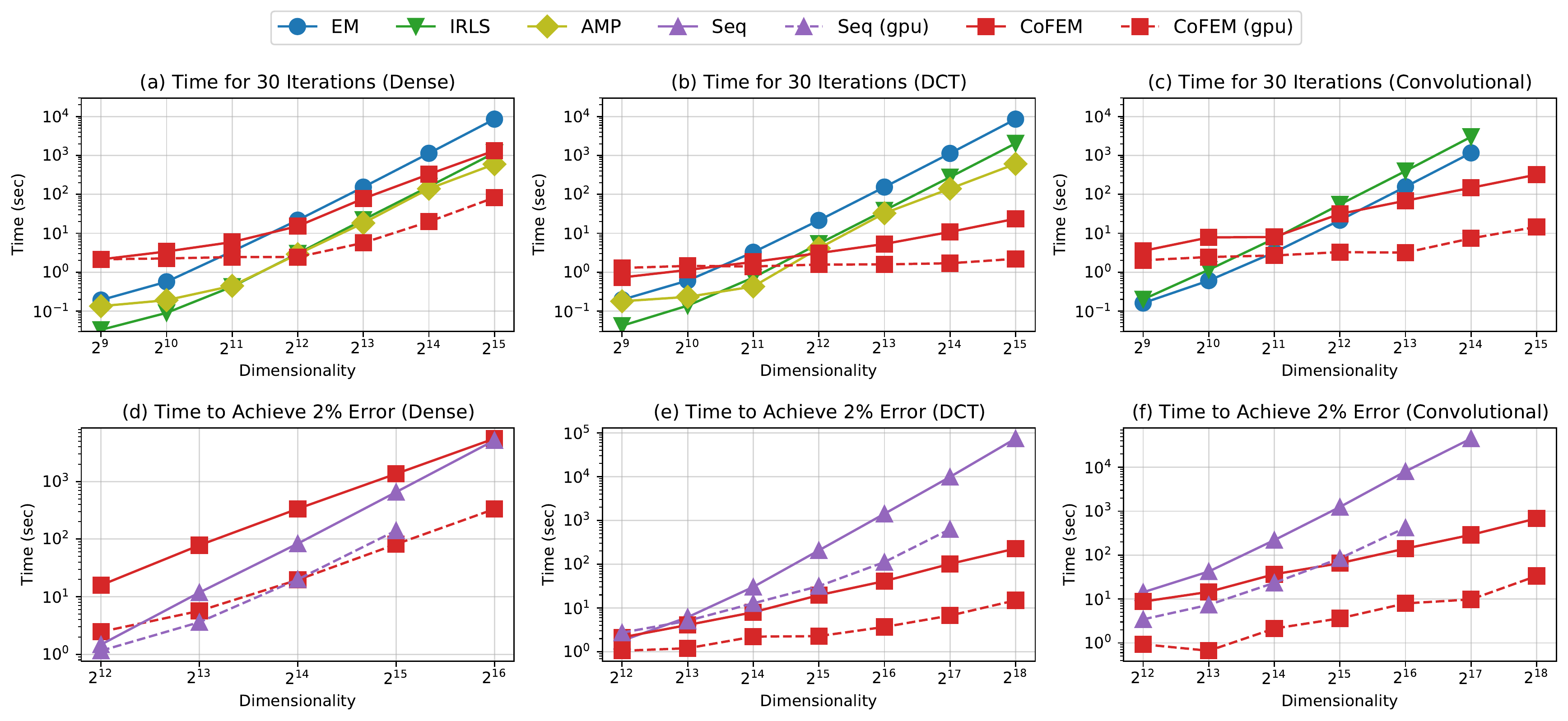} 
\caption{Comparing the scalability of different SBL inference schemes on log-log scales.  Some lines omit points for high dimensions due to memory issues.} \label{scale}
\end{center}
\end{figure*}

\subsection{Scalability Analysis}

We now analyze the scalability of CoFEM and other algorithms, by considering how computation time and memory requirements change as $D$ is increased.  In all settings, CoFEM has $K = 20$ probes and $U = 400$ maximum CG steps (with $\epsilon_\text{max} = 10^{-4}$), regardless of $D$.  The preconditioner employs $\theta_j = 1$ for all $j$.  Results are in Fig. \ref{scale}.

\subsubsection{CoFEM vs. EM-based Algorithms} We begin by comparing CoFEM against the other EM-based algorithms (EM, IRLS, AMP).  In each setting, we run each EM-based algorithm for $T= 30$ EM iterations, as this is sufficient to obtain low NRMSE (i.e. $<2\%$).    VI is omitted because it is unable to reach this threshold in many cases, as illustrated by Fig. \ref{acc}.  

For our first setting (Fig. \ref{scale}(a)), we consider the dense Gaussian $\bd \Phi$ with $D = 2^p$ for $p = \{9, 10, \ldots, 15\}$.  For each $D$, we let $N = \lfloor D / 4 \rfloor$ and $d = \lfloor 0.06 D \rfloor$.  As $D$ increases, we observe that CoFEM becomes faster than EM by an order of magnitude.  CoFEM is slower than AMP and IRLS, yet the gap decreases for large $D$.  This is because it takes $\tau_D = O(DN)$-time to apply a dense $\bd \Phi$ to a vector, so CoFEM has the same asymptotic complexity as AMP for the dense case (see Table \ref{tab:complex}).  Furthermore, due to its space-saving and parallelization-friendly design, CoFEM can exploit a GPU\footnote{We use a 16-GB Nvidia T4 GPU and a 32-GB, 2.3 GHz Intel Xeon CPU.} to be up to $7\times$ faster than IRLS/AMP and $100\times$ faster than EM.

Next, in Fig. \ref{scale}(b) and Fig. \ref{scale}(c), we repeat the experiment of increasing $D$ for the two structured $\bd \Phi$ (DCT and convolutional).  In both cases, there are fast algorithms for applying $\bd \Phi$ to an arbitrary vector in $\tau_D = O(D \log D)$-time.  As a result, CoFEM is faster at high $D$ than all other algorithms.  In the DCT case, we let $d = \lfloor 0.12 D \rfloor $ and $N = \lfloor D / 3 \rfloor$ for all $D$.  We observe that CoFEM can be faster than EM by up to $360\times$ on the CPU and $3800\times$ on the GPU, reducing over two hours of computation for EM at $D = 2^{15}$ to two seconds.  In the convolutional case, we let $d = \lfloor0.2 D\rfloor$.  For $D = 2^{15}$, EM and IRLS have memory issues due to the large $N = D$.  AMP is unable to yield a sensible solution (see Fig. \ref{acc}(c)).  In contrast, CoFEM (on both CPU and GPU) is accurate while being faster than EM/IRLS and not experiencing memory issues at high $D$.

\subsubsection{CoFEM vs. Seq} Finally, we compare CoFEM to the sequential algorithm.  Both of these algorithms have low space complexity (see Table \ref{tab:complex}), 
preventing memory issues at very high $D$.  Since CoFEM and Seq have different optimization procedures, we run both algorithms until they reach 2\% NRMSE.  We repeat the experimental settings from Fig. \ref{scale}(a)-(c) at higher dimensions $D = 2^{p}$ for $p =  \{12, 13, \ldots, 18\}$. 
 
 For the dense dictionary (Fig. \ref{scale}(d))\footnote{We could not run $D > 2^{16}$ for the dense case because there is not enough memory on our devices to store all the entries of $\bd \Phi$.}, we observe that Seq is faster than CoFEM on the CPU for low $D$, but the gap decreases for high $D$.  On the GPU, CoFEM is faster than Seq at moderate $D$ due to its greater ability to exploit parallelized hardware.  For the two structured dictionaries (Fig. \ref{scale}(e) and \ref{scale}(f)), CoFEM is faster across all settings.  We further observe that for many of the larger dimensions, Seq inevitably encounters memory issues as a covariance-based method.
 In contrast, CoFEM has no such issue due to its low space complexity.  For example, at $D = 2^{18}$, CoFEM can leverage the GPU to be $5000\times$ faster than Seq.
 In summary, CoFEM's ability to obviate covariance computation provides many advantages in terms of scalability over existing SBL inference schemes.
 
We emphasize that across all of our experiments, $U$ and $K$ are kept at small values despite substantially increasing $D$, demonstrating CoFEM's scalability at very high dimensions.
 
\begin{figure}
\begin{center}
\includegraphics[scale=0.45]{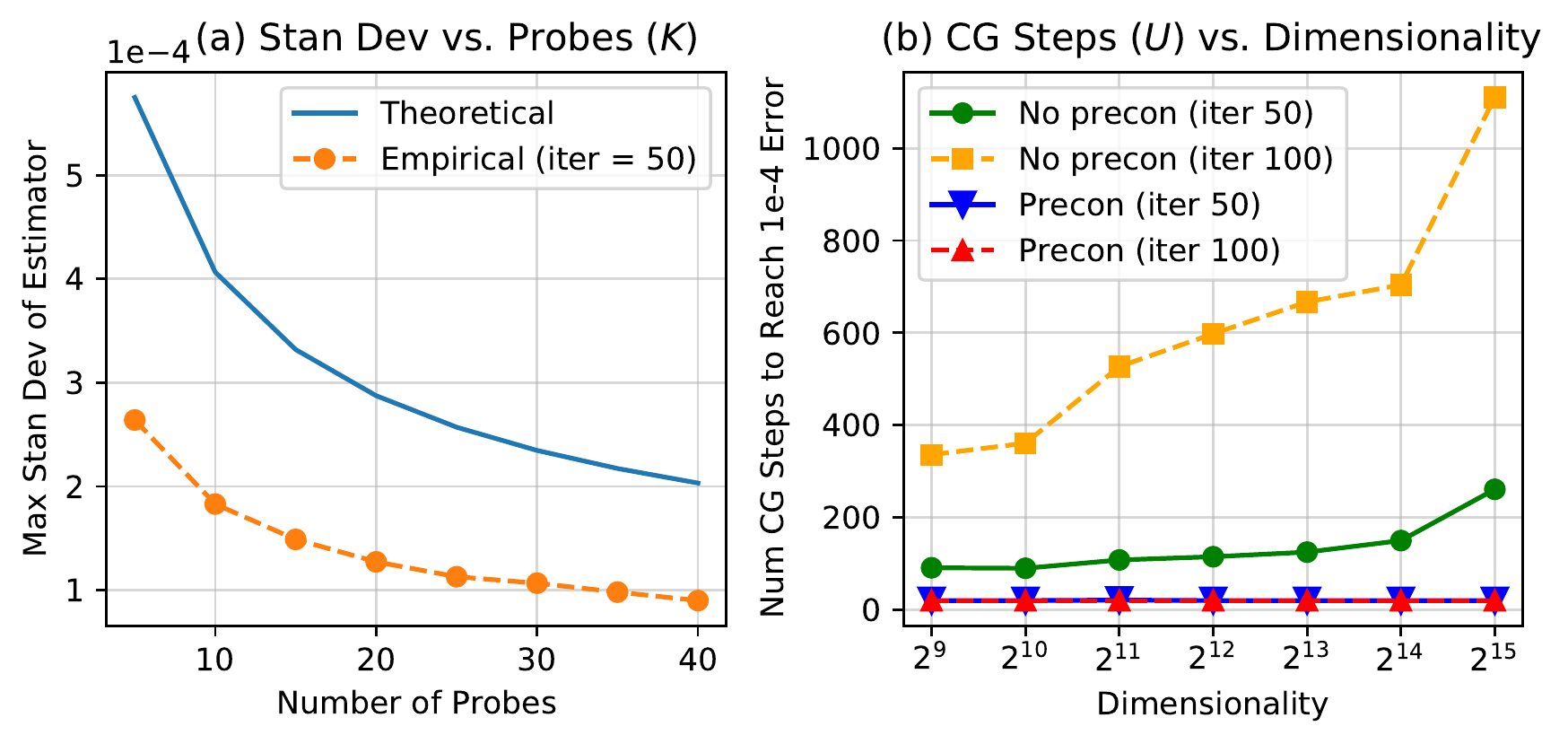}
\end{center}
\caption{Empirical insights into $U$ and $K$ based on theory of CoFEM.} \label{fig:emp}
\end{figure}

\begin{figure*}[h!]
\begin{center}
\includegraphics[scale=0.5]{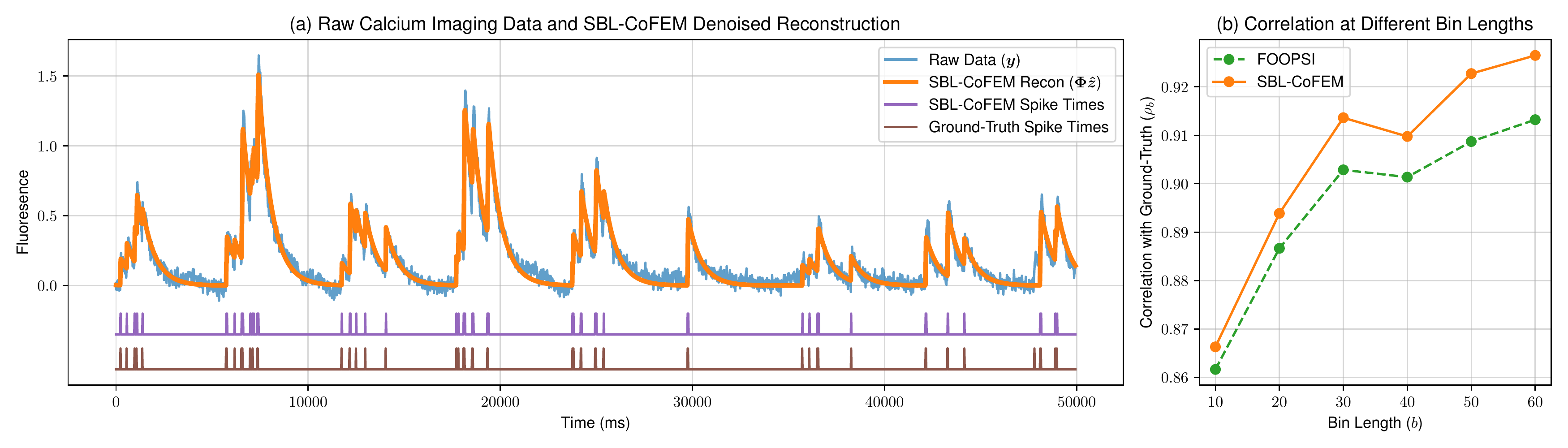}
\end{center}
\caption{Results of running SBL-CoFEM for calcium imaging and comparison with FOOPSI.}  \label{calc-res}
\end{figure*}

 \subsection{Empirical Insights from the Theory of CoFEM}
 Finally, we give some insights into our theory for CoFEM's hyperparameters $U$ and $K$ (Section \ref{sec:theory}).  We use the DCT dictionary setup of Section \ref{setup} with $N = \lfloor D / 3 \rfloor$ measurements.  
 
 In Fig. \ref{fig:emp}(a), we let $D = 1024$ and plot the relationship between (1) the number of probes $K$ and (2) the maximum standard deviation $\max_{j=1}^D \idx[j] \nu T$ over all $D$ coordinates of the diagonal estimator at iteration $T = 50$ of CoFEM.  We present a ``theoretical" curve calculated using Theorem \ref{var-bound-thm} with $\bd \Theta = \bold I$ and an ``empirical" curve in which each $\idx[j] \nu T$ is estimated through the empirical standard deviation of $1000$ Monte Carlo trials.  Fig. \ref{fig:emp}(a) shows that our theoretical bound (1) captures the true decay of the standard deviation as a function of $K$ and (2) accurately upperbounds the empirical curve after $T$ \emph{finite} iterations despite Theorem \ref{var-bound-thm}'s condition of $T \to \infty$.

 In Fig. 3(b), we consider different orders of magnitude for $D$ and plot the number of CG steps $U$ needed to achieve small error $\epsilon_{\text{max}} = 10^{-4}$ at the $T = 50$-th and $T = 100$-th iterations of CoFEM. We explore the impact of our diagonal preconditioner (Section \ref{sec:preconditioning}) on $U$.  The figure reveals that our theoretical insights from Section \ref{sec:theory-u} and Theorem \ref{kappa-bound} hold in practice: (1)  when there is no preconditioner and $\epsilon_{\text{max}}$ is fixed, $U$ needs to grow substantially with increasing iterations $T$ and/or dimensionality $D$, yet (2) when a preconditioner is used, $U$ can be small and constant for large $T$ and large $D$.

\section{Real-Data Experiments} \label{sec:experiments}
We now demonstrate the utility of CoFEM for {two real data} applications -- calcium deconvolution and MRI reconstruction.

\subsection{Calcium Deconvolution}

Calcium imaging is a widely used tool in neuroscience for monitoring the electrical activity of neurons {\cite{grienberger2012imaging}}.  It is a method for indirectly observing the spiking activity of a neuron through a fluorescence trace $\bd y \in \R^D$, approximated as the convolution of an intrinsically sparse spiking pattern $\bd z^* \in \R^D$ with a decaying calcium response $\bd \phi \in \R^D$.  In calcium deconvolution, we aim to recover $\bd z^*$ from $\bd y$ and $\bd \phi$.

\subsubsection{SBL Model} \label{sec:calc-model}
We can cast calcium deconvolution as a SBL problem with a dictionary $\bd \Phi \in \R^{D \times D}$ consisting of delayed (and truncated) versions of $\bd \phi$ as its columns, similar to the setting in Section \ref{sec:dict-types}. The data is then assumed to be generated as $\bd y = \bd \Phi \bd z^* + \bd \varepsilon$, where each $\varepsilon_i \sim \mathcal{N}(0, 1/\beta)$.
Since $\bd z^*$ for calcium deconvolution is a non-negative vector, we employ SBL with \emph{non-negativity constraints} (Section \ref{sec:sbl-nonneg-cofem}).   

\subsubsection{Spike Inference}  \label{sec:calc-inference}
We use the CoFEM inference algorithm adapted for non-negativity constraints.  Since $\bd \Phi$ represents a discrete-time convolution, we can efficiently apply $\bd \Phi$ to any vector $\bd v$ through fast Fourier transforms and an element-wise product.
Non-negative SBL yields a rectified Gaussian posterior $p(\bd z \given \bd y, \bhat \alpha)$ over the latent spikes $\bd z$.  To obtain a point estimate $\bhat z$, we find a \emph{filtered mode}.
Specifically, we first filter $\bd z$ by selecting components $z_j$ that are highly likely to be non-zero, i.e. $z_j$ such that $p(z_j = 0 \given \bd y, \bhat \alpha) < q$, where $q$ is some small percentile (e.g. $0.05$, $0.01$). This query is possible only because SBL models uncertainty in $\bd z$.  Setting the unselected components of $\bd z$ to zero, we then find the most likely values for all selected $z_j$'s according to the posterior, resulting in $\bhat z$. 
More details can be found in Appendix \ref{appendix-mode}.  This is analogous to thresholding heuristics commonly used by  $\ell_1$-based sparse coding algorithms \cite{friedrich2017fast}.  However, unlike those value-based strategies, the percentile filtering for SBL is value-agnostic and instead operates on the learned posterior.

\subsubsection{Data and Hyperparameters}
We apply SBL to five fluoresence traces from the GENIE dataset \cite{akerboom2012optimization}.  
Each $\bd y$ contains data at a sampling rate of $\nu = 60$ Hz for a total of $D = 14{,}400$ time points, which is a high-dimensional problem.  The response $\bd \phi$ has $\phi_i = (1 -\psi)^{i-1}$ for $\psi = 1 / (\nu \times 0.7) = 0.0238$, a widely-used value for the calcium indicator GCaMP6f.  CoFEM employs $T = 20$, $K = 20$ and $U = 400$. The noise precision $\beta$ is estimated from $\bd y$ through a Fourier domain procedure, as described in \cite{pnevmatikakis2016simultaneous}.  To obtain $\bhat z$, we use a filter percentile of $q = 0.05$.

\subsubsection{Results}                
To evaluate $\bhat z$, we employ the following standard practice \cite{pnevmatikakis2016simultaneous}: The GENIE dataset has ground-truth times for neural spikes.  Let $\bd z^* \in \R^D$ be a zero-one vector indicating when true spiking occurred.  For bin length $b$, we reduce $\bd z^*$ and $\bhat z$ to vectors $\bd c^*$ and $\bhat c$ of length $\lceil D / b \rceil$ by summing across windows of $b$ consecutive components.  We then compute the Pearson correlation coefficient $\rho_b$ between $\bd c^*$ and $\bhat c$.  A high value for $\rho_b$ indicates agreement between $\bhat z$ and $\bd z^*$.   

Fig. \ref{calc-res}(a) plots sample SBL-CoFEM outputs and compares its inferred spike times with the ground truth.  Fig. \ref{calc-res}(b) shows an averaged curve over the five traces of $\rho_b$ versus $b$ at various bin lengths $b \in \{10, 20, 30, 40, 50, 60\}$ for SBL-CoFEM and a {popular} $\ell_1$-based method called FOOPSI  \cite{vogelstein2010fast}. The figure shows that SBL-CoFEM outperforms FOOPSI, with the gap growing for larger bin sizes $b$.  



\subsection{Multi-Contrast MRI Reconstruction}
Magnetic resonance imaging (MRI) is one of the dominant modalities for imaging the human body \cite{nishimura2010principles}.  The standard data acquisition practice samples a set of points (called ``$k$-space") $\bd k \in \mathbb{C}^N$ from the two-dimensional Fourier transform (2DFT) of the image $\bd x \in \mathbb{R}^D$.\footnote{Though MRIs are generally complex-valued, the data we use here are real-valued.  See \cite{bilgic2011multi} for how to generalize SBL to the complex case.}  In practice, one may aim to collect $N < D$ points to reduce the amount of time a patient needs to remain in the scanner.  However, doing so leads to an ill-posed inverse problem $\bold M \bold F \bd x = \bd k$ for $\bd x$, where $\bold F \in \mathbb{C}^{D \times D}$ is the 2DFT and $\bold M \in \R^{N \times D}$ is an undersampling operator.  Thus, compressed sensing strategies often exploit the sparsity of $\bd x$ with respect to some transform for accurate reconstruction. 


In \emph{multi-contrast} MRI reconstruction, there are $L$ images $\bd x_1, \ldots, \bd x_L$ of an object that one wishes to recover from corresponding undersampled $k$-space measurements $\bd k_1, \ldots, \bd k_L$.  Bilgic et al. \cite{bilgic2011multi} showed that SBL with \emph{multi-task learning} (Section \ref{multi-task-cofem}) achieves successful joint recovery of the multiple images.  They outperformed $\ell_1$-based methods by exploiting common sparsity patterns among the horizontal/vertical image gradients {(i.e. row-wise/column-wise finite differences)} of $\bd x_1, \bd x_2, \ldots, \bd x_L$.  {However, the main drawback is the computation time for reconstruction, which is many times slower than $\ell_1$ methods.} We demonstrate how  CoFEM can accelerate this method while maintaining its superior performance.     

\begin{figure}
\begin{center}
\includegraphics[scale=0.45]{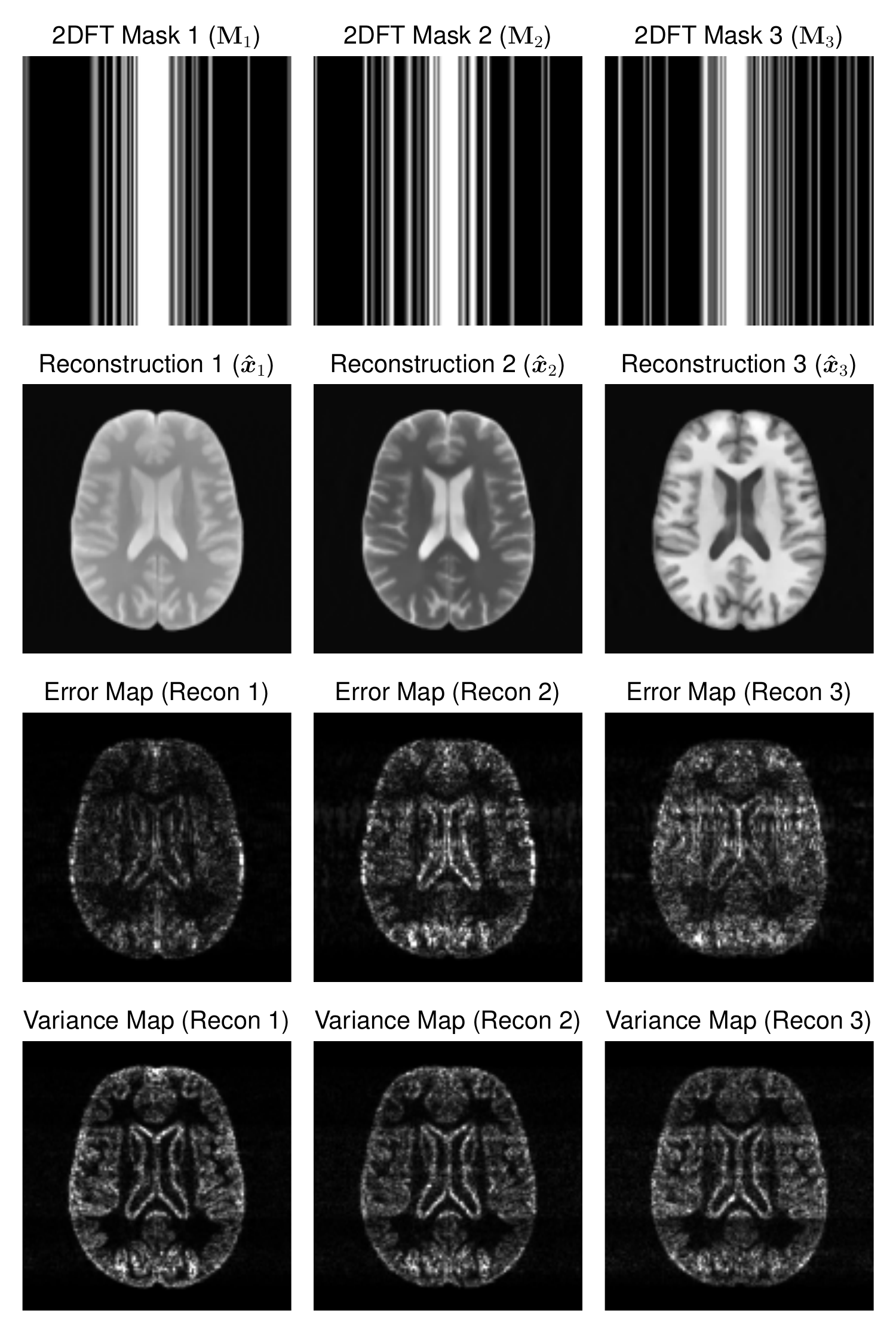}
\end{center}
\caption{Undersampling $k$-space masks and SBL-CoFEM reconstrutions for the SRI24 atlas.  Error maps are scaled by 15$\times$ to aid visualization.} \label{mri-fig}
\end{figure}

\subsubsection{SBL Model and Inference} \label{mri-model}
For each contrast $\ell$, let $\bd \Phi_\ell = \bold M_\ell \bold F$ denote the $\ell$-th undersampled 2DFT operator with undersampling mask $\bold M_\ell$. The task is to infer the sparse latent vector $\bd z^\text{horz}_\ell = \partial^\text{horz} \bd x_\ell\in \R^D$, where $\partial^\text{horz}$ denotes the horizontal image gradient operator.  Note that $\partial^\text{horz}$ is a convolution, implying that it corresponds to a diagonal matrix $\bd \Delta^\text{horz}$ in the Fourier domain.
Let the observed data be $\bd y^\text{horz}_\ell = \bold M_\ell \bd \Delta^\text{horz} \bold M_\ell^\top \bd k_\ell$ for all $\ell$.  We impose a multi-task SBL model on $(\bd z^\text{horz}_\ell, \bd y^\text{horz}_\ell, \bd \Phi_\ell)$, as per Eq. \eqref{multi-model}.  We use a shared $\bd \alpha^\text{horz}$ to ensure that we will learn grouped sparsity patterns among $\{\bd z_\ell^\text{horz}\}_{\ell=1}^L$. The same procedure is repeated for $\bd z_\ell^\text{vert}$ based on the operator $\partial^\text{vert}$.
Further details are in \cite{bilgic2011multi}.

We employ CoFEM for multi-task SBL (Section \ref{multi-task-cofem}) to recover $p(\bd z_\ell^\text{horz} \given \bd y^\text{horz}_\ell, \bhat \alpha^\text{horz})$ and $p(\bd z_\ell^\text{vert} \given \bd y_\ell^\text{vert}, \bhat \alpha^\text{vert})$ for all $\ell$. The reconstructed image $\hat{\bd x}_\ell$ are computed from these posterior distributions. More details are given in Appendix \ref{appendix-final-recon}.

\begin{table}
\centering
\caption{Results on Multi-Contrast MRI Reconstruction} \label{mri-table}
\begin{tabular}{lllll}
\toprule
\textbf{Algorithm} & \textbf{NRMSE} & \textbf{Computation Time} \\
\hline
SparseMRI & 5.4\% & 22.3 min  \\
SBL-Seq & 3.4\% & 89.9 min\\
SBL-CoFEM  & 2.9\% & 2.5 min (CPU), 0.2 min (GPU) \\
\bottomrule
\end{tabular}
\end{table}

\subsubsection{Data and Hyperparameters}
We consider the SRI24 atlas \cite{rohlfing2010sri24}, a set of $L = 3$ MRI contrasts with dimensions $200 \times 200$ for a total of $D = 40{,}000$ pixels.  For each image $\bd x_\ell^*$, we undersample its 2DFT by a factor of four in the horizontal dimension, observing $N = 10{,}000$ points to form $\bd k_\ell$.  The mask $\bold M_\ell$ is randomly determined according to a power rule favoring the center of $k$-space \cite{lustig2007sparse}. For CoFEM, we have $T = 15$, $\beta = 10^{6}$, $K = 8$, $\epsilon_\text{max} = 10^{-5}$ and $U = 200$.  The algorithm is not sensitive to variations in these values.               

\subsubsection{Results}
Fig. \ref{mri-fig} provides images of the masks $\bold M_\ell$ and SBL-CoFEM's reconstructions $\bhat x_\ell$. Success is measured through NRMSE between the vectorized forms of $\bhat x \in \R^{D \cdot L}$ and $\bd x^* \in \R^{D \cdot L}$.
Table \ref{mri-table} compares SBL-CoFEM against SparseMRI ($\ell_1$-based compressed sensing \cite{lustig2007sparse}) and SBL-Seq (SBL with sequential algorithm~\cite{bilgic2011multi}).  Although SBL-Seq has lower NRMSE than SparseMRI, it requires high computation time.  In contrast, SBL-CoFEM attains the lowest error and can exploit the GPU to be 450$\times$ faster than SBL-Seq.

{Finally, the bottom half of Fig. \ref{mri-fig} displays error maps of absolute differences between $\bhat x_\ell$ and $\bd x^*_\ell$, along with \emph{variance maps} for each image.  Each variance map captures the model's confidence over different areas of its reconstruction; pixels with high variance indicate more potential to deviate from the point estimate $\bhat x_\ell$. SBL can create variance maps because it models uncertainty; non-Bayesian methods that do not model uncertainty (e.g. $\ell_1$ methods) cannot generate these maps. 
Appendix \ref{var-maps} explains how SBL-CoFEM can generate these variance maps using the diagonal estimation rule.   
    The variance maps bear similarity to the ground-truth error maps, suggesting that SBL-CoFEM can predict its errors in reconstruction.  
}

\section{Conclusion}
We developed covariance-free expectation-maximization (CoFEM) to accelerate sparse Bayesian learning (SBL).  By solving linear systems to circumvent matrix inversion, CoFEM exhibits superior time-efficiency and space-efficiency over existing SBL inference schemes, especially when the dictionary $\bd \Phi$ admits fast matrix-vector multiplication.  We theoretically analyzed CoFEM's hyperparameters, such as the number of linear systems and number of solver steps, showing that they can remain small even at high dimensions.  Coupled with GPU acceleration, CoFEM can be up to thousands of times faster than other SBL methods without sacrificing sparse coding accuracy.  Finally, we used CoFEM for real-data applications, showing that it can adapt to multi-task learning and non-negativity constraints, while enabling SBL to be competitive with non-Bayesian methods in accuracy and scalability.



%

\appendices

\section{Proof of Theorem \ref{var-bound-thm}} \label{var-bound-proof}

\begin{proof}
We begin by characterizing $\bhat \Sigma := \lim_{t \to \infty} \bidx \Sigma t$.  By the block matrix inversion formula \cite{strang2006linear}, any symmetric positive definite matrix with diagonal blocks $\bold X, \bold Z$ and off-diagonal block $\bold Y$ and can be inverted as 
\begin{align}
\begin{bmatrix}\bold X &  \hspace{-0.5em} \bold Y  \\ \bold Y^\top &  \hspace{-0.5em} \bold Z \end{bmatrix}^{-1} \hspace{-1em} = 
\begin{bmatrix}
\bold X^{-1} + \bold X^{-1} \bold Y \bold W \bold Y^\top \bold X^{-1} &  \hspace{-0.5em}-\bold X^{-1} \bold Y \bold W \\
-\bold W  \bold Y^\top \bold X^{-1} &  \hspace{-0.5em}\bold W
\end{bmatrix} \hspace{-0.2em} , \hspace{-0.2em} \label{block-matrix}
\end{align}
where $\bold W := (\bold Z - \bold Y^\top \bold X^{-1} \bold Y)^{-1}$ is the inverse Schur complement.  We apply Eq. \eqref{block-matrix} to $\bidx \Sigma t$, with $\bold X:= \beta \bold \Phi^\top_\mathcal{S} \bd \Phi_\mathcal{S} + \mathrm{diag}\{\bidx[\mathcal{S}] \alpha t\}$, $\bold Y := \beta \bd \Phi^\top_\mathcal{S} \bd \Phi_\mathcal{U}$, and $\bold Z :=  \beta \bold \Phi^\top_\mathcal{U} \bd \Phi_\mathcal{U} + \mathrm{diag}\{\bidx[\mathcal{U}] \alpha t\}$.  As $t \to \infty$, we have $\idx[j] \alpha t \to \infty$ for $j \in \mathcal{U}$, which forces $\bold W \to \bd 0$, where $\bold 0$ is the zero-matrix.  Then, by Eq. \eqref{block-matrix},
\begin{align}
\bhat \Sigma = \lim_{t \to \infty} \bidx \Sigma t = \begin{bmatrix}
 (\beta \bold \Phi^\top_\mathcal{S} \bd \Phi_\mathcal{S} + \mathrm{diag}\{\bhat \alpha_\mathcal{S}\})^{-1} & \bold 0 \\
\bold 0 & \bold 0
\end{bmatrix}. \label{sigma-converge}
\end{align}  

Eq. \eqref{inactive-var} follows from applying Eq. \eqref{sigma-converge} to Lemma \ref{diag-var-lemma} for $j \in \mathcal{U}$; since all rows of $\bhat \Sigma$ that correspond to $\mathcal{U}$ are zero, the estimator's standard deviation must also converge to zero.  

We now prove Eq. \eqref{active-var} for an active index $j \in \mathcal{S}$.  Let $\psi_j > 0$ be any positive real number.  Using Lemma \ref{diag-var-lemma} and Eq. \eqref{sigma-converge}, we can bound $\hat{\nu}_j := \lim_{t \to \infty} \idx[j] \nu t$ with 
\begin{align}
\hspace{-0.5em} \hat{\nu}_j  = \sqrt{\frac{1}{K} \hspace{-0.3em}  \sum_{\substack{j' \in \mathcal{T}_j}} \hspace{-0.3em}  \mathrm {\hat{\Sigma}}_{j, j'}^2} \leq \frac{1}{\sqrt{K}} \sqrt{({\hat{\Sigma}}_{j, j} - \psi_j)^2 + \hspace{-0.5em} \sum_{\substack{j' \in \mathcal{T}_j}} \mathrm {\hat{\Sigma}}_{j, j'}^2},  \label{diag-bound}
\end{align}
where $\mathcal{T}_j := \mathcal{S} \setminus \{j\}$.
Let $\bold \Psi \in \R^{|\mathcal{S}| \times |\mathcal{S}|}$ be a diagonal matrix with $\psi_j$ for all $j \in \mathcal{S}$ along its diagonal.  Let $\bd e_j$ be the standard unit vector in $\R^{|\mathcal{S}|}$ corresponding to $j$.  From Eq. \eqref{diag-bound}, we have
\begin{align}
\hspace{-0.6em} \hat{\nu}_j &\leq  \tfrac{1}{\sqrt{K}}  \norm{(\bhat \Sigma_{\mathcal{S}, \mathcal{S}} - \bold \Psi) \bd e_j}_2 \nonumber \\&\leq  \tfrac{1}{\sqrt{K}}  \norm{\bhat \Sigma_{\mathcal{S}, \mathcal{S}} - \bold \Psi}_2 =  \tfrac{1}{\sqrt{K}}\norm{\bold \Psi(\bhat \Sigma_{\mathcal{S}, \mathcal{S}}^{-1} - \bold \Psi^{-1})\bhat \Sigma_{\mathcal{S}, \mathcal{S}}}_2 \nonumber \\
&\leq  \tfrac{1}{\sqrt{K}}  \norm{\bold \Psi (\bhat \Sigma_{\mathcal{S}, \mathcal{S}}^{-1} - \bold \Psi^{-1})}_2  \norm{\bhat \Sigma_{\mathcal{S}, \mathcal{S}}}_2 \nonumber \\
&=  \tfrac{1}{\sqrt{K}}  \norm{\bold \Psi (\beta \bd \Phi^\top_\mathcal{S} \bd \Phi_\mathcal{S} + \mathrm{diag}\{\bhat \alpha_\mathcal{S}\} - \bold \Psi^{-1})}_2 \norm{\bhat \Sigma_{\mathcal{S}, \mathcal{S}}}_2,  \label{bound-prod}
\end{align}  
where the last step uses Eq. \eqref{sigma-converge} to expand $\bhat \Sigma_{\mathcal{S}, \mathcal{S}}^{-1}$.  

We now perform a change-of-variables: Let $\bold \Theta \in \R^{|\mathcal{S}| \times |\mathcal{S}|}$ be any diagonal matrix of positive diagonal values.  We define $\bd \Psi := (\beta \bd \Theta + \mathrm{diag} \{\bhat \alpha_\mathcal{S} \})^{-1}$ and re-write Eq. \eqref{bound-prod} as     
\begin{align}
\hat{\nu}_j  &\leq \tfrac{1}{\sqrt{K}} \norm{(\beta \bd \Theta + \mathrm{diag} \{\bhat \alpha_\mathcal{S} \})^{-1} (\beta \bd \Phi^\top_\mathcal{S} \bd \Phi_\mathcal{S} - \beta \bd \Theta)}_2  \norm{\bhat \Sigma_{\mathcal{S}, \mathcal{S}}}_2 \nonumber \\
&\leq  \tfrac{1}{\sqrt{K}}  \norm{(\beta \bd \Theta)^{-1} (\beta \bd \Phi^\top_\mathcal{S} \bd \Phi_\mathcal{S} - \beta \bd \Theta)}_2  \norm{\bhat \Sigma_{\mathcal{S}, \mathcal{S}}}_2 \nonumber \\
&=  \tfrac{1}{\sqrt{K}}   \norm{ \bd \Theta^{-1} \bd \Phi^\top_\mathcal{S} \bd \Phi_\mathcal{S} - \bold I}_2 \norm{\bhat \Sigma_{\mathcal{S}, \mathcal{S}}}_2. \label{final-bound}
\end{align} 

Finally, we can bound the last term of Eq. \eqref{final-bound} with
\begin{align}
\norm{\bhat \Sigma_{\mathcal{S}, \mathcal{S}}}_2 = \frac{1}{\lambda_\text{min}(\bhat \Sigma_{\mathcal{S}, \mathcal{S}}^{-1})} \leq \frac{1}{\beta \cdot \lambda_\text{min}(\bd \Phi_\mathcal{S}^\top \bd \Phi_\mathcal{S})},
\end{align}
where we use the fact that the singular values coincide with the eigenvalues for a symmetric positive definite matrix.  
\end{proof}

\section{Proof of Theorem \ref{kappa-bound}} \label{kappa-bound-proof}

\begin{proof} The statement follows from Eq. \eqref{epsilon-bound} and showing that $\hat{\kappa} := \lim_{t \to \infty} \kappa(\boldidx {A'} t) \leq (1 + \xi) / (1 - \xi)$, where $\xi := \norm{\bd \Theta^{-1} \bd \Phi_\mathcal{S}^\top \bd \Phi_\mathcal{S} - \bold I}$.  Let $\bold \Delta := \beta \bd \Phi^\top \bd \Phi - \mathrm{diag}\{\beta \bd \theta\}$.  Then, 
\begin{align}
&\boldidx A t = \boldidx M t + \bold \Delta \implies \boldidx {A'} t = \bold I + (\boldidx M t)^{-\frac{1}{2}} \bold \Delta (\boldidx M t)^{-\frac{1}{2}} \nonumber \\
& \hspace{-0.7em} \implies \hat{\bold A}'  := \lim_{t \to \infty} \boldidx {A'} t = 
\bold I + \begin{bmatrix}
\hat{\bold M}_{\mathcal{S}, \mathcal{S}}^{-\frac{1}{2}} \bold \Delta_{\mathcal{S}, \mathcal{S}} \hat{\bold M}_{\mathcal{S}, \mathcal{S}}^{-\frac{1}{2}} & \bd 0 \\
\bd 0 & \bd 0
\end{bmatrix}, \label{aprime-mat}
\end{align} 
where $\hat{\bold M} := \mathrm{diag}\{\beta \bd \theta + \bhat \alpha\}$.  

Our goal is to bound $\hat{\kappa} = \lambda_\text{max}(\hat{\bold A}') / \lambda_\text{min}(\hat{\bold A}')$. Eq. \eqref{aprime-mat} shows that if $\eta$ is an eigenvalue of $\hat{\bold M}_{\mathcal{S}, \mathcal{S}}^{-1/2} \bold \Delta_{\mathcal{S}, \mathcal{S}} \hat{\bold M}_{\mathcal{S}, \mathcal{S}}^{-1/2}$, then $1 + \eta$ is an eigenvalue of $\hat{\bold A}'$.  This reduces our task to bounding $\eta$.  A matrix $\bold X \in \R^{D \times D}$ is similar to another matrix $\bold Y \in \R^{D \times D}$ if there exists an invertible matrix $\bold Z \in \R^{D \times D}$ such that $\bold Y = \bold Z^{-1} \bold X \bold Z$, and this further implies that $\bold X$ and $\bold Y$ have the same eigenvalues \cite{strang2006linear}.  Since $\bold X := \hat{\bold M}_{\mathcal{S}, \mathcal{S}}^{-1/2} \bold \Delta_{\mathcal{S}, \mathcal{S}} \hat{\bold M}_{\mathcal{S}, \mathcal{S}}^{-1/2}$ is similar to $\bold Y := \hat{\bold M}_{\mathcal{S}, \mathcal{S}}^{-1} \bold \Delta_{\mathcal{S}, \mathcal{S}}$ for $\bold Z := \hat{\bold M}_{\mathcal{S}, \mathcal{S}}^{1/2}$, it follows that $\eta$ is also an eigenvalue of $\bold Y$.  The absolute eigenvalues of a matrix cannot exceed its spectral norm, implying
\begin{align}
&\hspace{-0.5em} |\eta| \leq \norm{\hat{\bold M}_{\mathcal{S}, \mathcal{S}}^{-1} \bold \Delta_{\mathcal{S}, \mathcal{S}}}_2 = \norm{\hat{\bold M}_{\mathcal{S}, \mathcal{S}}^{-1} (\beta \bd \Phi_\mathcal{S}^\top \bd \Phi_\mathcal{S} -  \beta \bold \Theta)}_2\nonumber \\
&\hspace{-0.5em}  \leq \norm{(\beta \bd \Theta)^{-1} (\beta \bd \Phi_\mathcal{S}^\top \bd \Phi_\mathcal{S} -  \beta \bold \Theta)}_2 = \norm{\bd \Theta^{-1} \bd \Phi_\mathcal{S}^\top \bd \Phi_\mathcal{S} - \bold I}_2.
\end{align}  
It follows that $\lambda_\text{max}(\hat{\bold A}') \leq 1 + \norm{\bd \Theta^{-1} \bd \Phi_\mathcal{S}^\top \bd \Phi_\mathcal{S} - \bold I}_2$ and $\lambda_\text{min}(\hat{\bold A}') \geq 1 - \norm{\bd \Theta^{-1} \bd \Phi_\mathcal{S}^\top \bd \Phi_\mathcal{S} - \bold I}_2$, which bounds $\kappa(\hat{\bold A}')$.  
\end{proof}

\section{Details of SBL for Calcium Deconvolution}

%

\subsubsection{Filtered Mode} \label{appendix-mode}
Let $\mathcal{S} \subseteq \{1, 2, \ldots, D\}$ be the set of selected indices after percentile filtering of $p(\bd z \given \bd y, \bhat \alpha)$ recovered by CoFEM.  Let $\bd \Phi_\mathcal{S} \in \R^{N \times |\mathcal{S}|}$ denote the sub-matrix of $\bd \Phi$ composed of the columns corresponding to $\mathcal{S}$.  Then, the \emph{filtered mode} is the solution to the following problem: 
\begin{align}
\bhat u = \arg \min_{\bd u \geq \bd 0 \in \R^{|\mathcal{S}|}} ||\bd y - \bd \Phi_\mathcal{S} \bd u||_2^2 + \sum_{j \in \mathcal{S}} (\hat{\alpha}_j / \beta) u_j^2, \label{appendix-eq}
\end{align} 
which can be obtained using non-negative least-squares solvers.  Solving Eq. \eqref{appendix-eq} is fast in practice, because it is a low-dimensional problem (i.e. $|\mathcal{S}| \ll D$).  Our point estimate solution is $\bhat  z$, where $\hat{z}_j = u_j$ for $j \in \mathcal{S}$ and $\hat{z}_j = 0$ for $j \not \in \mathcal{S}$.

{
\section{Details of SBL for MRI Reconstruction} 

\subsubsection{Final Reconstruction}\label{appendix-final-recon}
Let $\bd \mu_\ell^\text{horz}$ and $\bd \mu_\ell^\text{vert}$ denote the respective means of $p(\bd z_\ell^\text{horz} \given \bd y^\text{horz}_\ell, \bhat \alpha^\text{horz})$ and $p(\bd z_\ell^\text{vert} \given \bd y_\ell^\text{vert}, \bhat \alpha^\text{vert})$. These quantities are combined through solving a constrained least-squares problem to yield a final reconstruction $\bhat x_\ell$: 
\begin{align}
\bhat x_\ell = \arg &\min_{\bd x_\ell} \norm{\partial^\text{horz} \bd x_\ell - \bd \mu_\ell^\text{horz}}_2^2 + \norm{\partial^\text{vert} \bd x_\ell - \bd \mu_\ell^\text{vert}}_2^2, \nonumber \\
&\text{ s.t. } \bold M_\ell \bold F \bd x_\ell = \bd k_\ell. \label{mri-least-squares}
\end{align}  
We use Parseval's Theorem \cite{oppenheim2001discrete} to cast Eq. \eqref{mri-least-squares} to the Fourier domain.  This converts $\partial^\text{horz}$ and $\partial^\text{vert}$ into diagonal matrices $\bold \Delta^\text{horz}$ and $\bold \Delta^\text{vert}$ in the Fourier domain, giving an element-wise separable problem with a closed-form solution \cite{bilgic2011multi}.  

\subsubsection{Variance Map} \label{var-maps}
For each MRI contrast $\ell$, SBL learns posterior distributions $\mathcal{N}(\bd \mu^\text{horz}_\ell, \bd \Sigma^\text{horz}_\ell)$ and $\mathcal{N}(\bd \mu^\text{vert}_\ell, \bd \Sigma^\text{vert}_\ell)$ for the image gradients. We use the fact that for $\bd z\sim\mathcal{N}(\bd \mu, \bd \Sigma)$ and a matrix $\bold E$, we have $\bold E \bd z\sim \mathcal{N}(\bold E \bd \mu, \bold E \bold \Sigma \bold E^\top)$. The solution to Eq. \eqref{mri-least-squares} has the form $\bhat x_\ell = \bold E_1 \bd \mu_\ell^\text{horz} + \bold E_2 \bd \mu_\ell^\text{horz} + \bd e$ for some matrices $\bold E_1, \bold E_2 \in \R^{D \times D}$ and some vector $\bd e \in \R^D$.  To obtain variance maps for $\bhat x_\ell$, we find the diagonal elements of its covariance matrix $\bd \Psi := \bold E_1 \bd \Sigma^\text{horz}_\ell \bold E_1^\top + \bold E_2 \bd \Sigma^\text{vert}_\ell \bold E_2^\top$.  We can do this by drawing probes and applying the diagonal estimation rule (Section \ref{simp-estep}).  In doing so, we need to apply $\bd \Sigma^\text{horz}_\ell = (\bd \Phi^\top \bd \Phi + \bhat \alpha^\text{horz})^{-1}$ and  $\bd \Sigma^\text{vert}_\ell = (\bd \Phi^\top \bd \Phi + \bhat \alpha^\text{vert})^{-1}$ to arbitrary vectors, which can be done via parallel CG (Section \ref{parallel-cg-alg}). 
}



\ifCLASSOPTIONcaptionsoff
  \newpage
\fi



%
\bibliographystyle{IEEEtran}
\bibliography{ref}
%
%
%
%




\end{document}